\DeclareMathAlphabet{\mathpzc}{OT1}{pzc}{m}{it}
\newcommand{\eg}{\textit{e}.\textit{g}.}
\newtheorem{theorem}{Theorem}
\newdefinition{assumption}{Assumption}
\newdefinition{definition}{Definition}
\newdefinition{remark}{Remark}
\newproof{proof}{Proof}
\journal{}
\let\OLDthebibliography\thebibliography
\renewcommand\thebibliography[1]{
  \OLDthebibliography{#1}
  \setlength{\parskip}{0pt}
  \setlength{\itemsep}{0pt plus 0.3ex}
}
\begin{document}

\begin{frontmatter}

\title{Multi-lane Unsignalized Intersection Cooperation with Flexible Lane Direction\\based on Multi-vehicle Formation Control}

\author[thu]{Mengchi Cai}
\ead{cmc18@mails.tsinghua.edu.cn}

\author[thu]{Qing Xu\corref{cor1}}
\ead{qingxu@tsinghua.edu.cn}

\author[thu]{Chaoyi Chen}
\ead{chency19@mails.tsinghua.edu.cn}

\author[thu]{Jiawei Wang}
\ead{wang-jw18@mails.tsinghua.edu.cn}

\author[thu]{Keqiang Li}
\ead{likq@tsinghua.edu.cn}

\author[thu]{Jianqiang Wang}
\ead{wjqlws@tsinghua.edu.cn}

\author[intel]{Xiangbin Wu}
\ead{xiangbin.wu@intel.com}

\address[thu]{School of Vehicle and Mobility, Tsinghua University, Beijing, China}
\address[intel]{Intel Lab China, Beijing, China}

\cortext[cor1]{Corresponding author}

\begin{abstract}
Unsignalized intersection cooperation of connected and automated vehicles (CAVs) is able to eliminate green time loss of signalized intersections and improve traffic efficiency. Most of the existing research on unsignalized intersection cooperation considers fixed lane direction, where only specific turning behavior of vehicles is allowed on each lane. Given that traffic volume and the proportion of vehicles with different turning expectation may change with time, fixed lane direction may lead to inefficiency at intersections. This paper proposes a multi-lane unsignalized intersection cooperation method that considers flexible lane direction. The two-dimensional distribution of vehicles is calculated and vehicles that are not in conflict are scheduled to pass the intersection simultaneously. The formation reconfiguration method is utilized to achieve collision-free longitudinal and lateral position adjustment of vehicles. Simulations are conducted at different input traffic volumes and turning proportion of incoming vehicles, and the results indicate that our method outperformances the fixed-lane-direction unsignalized cooperation method and the signalized method.

\end{abstract}

\begin{keyword}
multi-lane unsignalized intersections, flexible lane direction, formation control
\end{keyword}

\end{frontmatter}

%
\section{Introduction}
\label{intro}
%

Intersection is one of the most common traffic scenarios where congestion happens. Numerous intelligent intersection management methods are proposed to reduce congestion and improve traffic efficiency~\citep{bian2019cooperation,chen2021mixed,ge2021real}. Among the existing research about intersection management, signalized intersections and unsignalized intersections are the two main scenarios. At signalized intersections, researchers optimize traffic signal timing, control vehicles to adapt to the signal timing, and combine the two ideas to achieve cooperative optimization of both signal timing and vehicle control~\citep{mirchandani2001real,xu2018cooperative}. However, the stop-and-go behavior of vehicles at signalized intersections can't be avoided due to the natural property of traffic signals, which may bring about green time loss and result in inefficiency for traffic management. Hence, researchers have started to focus on methods for unsignalized intersection cooperation.

In the field of unsignalized intersection management, reservation methods, planning methods, and optimization methods are three main categories. Reservation methods centralizedly organize trajectories of vehicles where each vehicle reserves a block of space-time in the intersection and the centralized controller manages the reservations according to the “First Come, First Served” policy~\citep{dresner2008multiagent,lee2012development,huang2012assessing}. Reservation is a simple way to achieve multi-vehicle collision-free passing without traffic signal, but may not fully utilize the cooperation potential of vehicles and limit the improvement of traffic efficiency. Planning methods usually plan vehicles' spatiotemporal trajectories to resolve conflict and avoid collision between vehicles. Passing order is the main issue that most of the planning methods consider, and some algorithms are designed to calculate passing order of vehicles, \eg, depth-first searching methods~\citep{xu2018distributed}, Monte Carlo Tree Searching methods~\citep{xu2019cooperative}, and other graphical methods~\citep{li2006cooperative,chen2021conflict,chen2021graph}. Optimization methods centralizedly model the behavior of vehicles and solve the optimization problem with collision-avoidance constraints. Some methods have been proposed to solve the centralized optimization problem, \eg, dynamic programming methods~\citep{yan2009autonomous}, model predictive control (MPC) methods~\citep{kamal2014vehicle}, and other optimization methods~\citep{wu2012cooperative,lee2013sustainability}.

Among the above intersection management methods, fixed lane direction is commonly considered, where only specific turning is allowed on each lane. In fact, fixed lane direction is a legacy from the signalized intersections to guide vehicles with different turning expectation to distribute on different lanes and avoid collision when passing the intersection. However, considering the fluctuation of traffic input where proportion of vehicles with different turning expectation may change with time, the capacity of some lanes can't be fully utilized in fixed-lane-direction intersections. Thus, removing the constraint of fixed lane direction and making full use of intersection capacity can further improve traffic efficiency with help of CAVs. Some research has focused on dynamic lane use and lane-allocation-free intersection management. Dynamic lane use methods usually adjust the lane-use configuration of intersections according to the changing traffic inputs. Some research combines dynamic lane use and signal control to improve traffic efficiency at intersections~\citep{alhajyaseen2017integration,wu2012simulation}. Some other research extends the idea of dynamic lane use and make some lanes reversable (allowing reverse traffic flow) to balance traffic flow demand at different time~\citep{lu2018bi,phan2019space}. A cooperation method for lane-allocation-free intersections is proposed in~\cite{yu2019corridor} which optimizes vehicles' trajectories by mixed-integer linear programming. A lane-change-free cooperation method is proposed in~\cite{he2018erasing} for future road intersections, where vehicles don't change lane and adjust their longitudinal position to pass the intersection without collision. Other methods considering dynamic lane use include~\cite{zhang2012dynamicc,zhao2013two}.

Most of the existing intersection management methods ignore or simplify the lane changing behavior of vehicles, \eg, lane changing is assumed to have finished before vehicles enter the intersections in~\cite{chen2021conflict,chen2021graph}, and is assumed to finish instantly in~\cite{yu2019corridor}. Success rate of lane changing has a great impact on the efficiency of intersections, and unsuccessful and inefficient lane changing is one of the main reasons that cause traffic congestion~\citep{cai2021formationa}. Some cooperative lane changing methods have been proposed in multi-lane scenarios, \eg, formation control methods, vehicle sorting methods, and priority-based methods.

Multi-lane formation control methods consider vehicles in a certain range as a group, and cooperatively control them to achieve formation forming, structure reconfiguration, and vehicle joining/leaving management~\citep{marjovi2015distributed,cai2019multi,zheng2021distance}. Both formation control and vehicle sorting methods have focused on the problem of collision avoidance when a group of vehicles perform cooperative lane changing~\citep{wu2021cooperative,cao2021platoon,cai2021formationb}. Vehicles' preference on specific lanes is considered in~\cite{cai2021formationc}, which is a key issue in intersection management because vehicles need to change to the lanes where their desired turning is allowed. A vehicle regrouping method based on multi-vehicle formation control is proposed in~\cite{xu2021coordinated} to guide vehicles to change lanes in a multi-lane intersection. Some priority-based methods are combined with heuristic searching algorithms to plan trajectories for vehicles considering lane changing at ramps and lane-drop bottlenecks~\citep{xu2020bi}. The above methods solve cooperative lane changing problems in multi-lane scenarios, and have the potential to be applied to multi-lane intersections.

This paper focuses on the problem of multi-lane unsignalized intersection cooperation with flexible lane direction. The main contributions of this paper include that:
\begin{enumerate}[(1)]
\item A two-dimensional distribution calculation method is proposed to calculate both longitudinal (passing order) and lateral (lane occupation) position for vehicles at intersections. Conflict relationship between different traffic movement is defined and available position is calculated for each vehicle considering achievability. The proposed method can be applied to multi-lane unsignalized intersections, either with flexible lane direction or with fixed lane direction. 
\item The formation reconfiguration method using relative motion planning and multi-stage vehicle control is utilized for vehicles to achieve the calculated distribution. Collision-free motion of vehicles is planned in the relative coordinate system. Multi-stage trajectories are planned for vehicles and trajectory following with spatiotemporal constraints is modeled and solved as an optimal control problem. Compared with existing research that ignores or simplify lane changing behavior of vehicles, this method enables efficient lane changing at intersections.
\item The performance of the proposed method is verified in simulations at different input traffic volumes and different input turning proportion of vehicles. Simulations show that removing lane direction constraints can significantly imporve passing capacity of intersections because more vehicles are allowed to pass the conflict zone of intersections simultaneously. The results indicate that the efficiency of the proposed method outperforms both the unsignalized cooperation method with fixed lane direction and the signalized method.
\end{enumerate}

The rest of this paper is organized as follows. Section~\ref{frame} introduces the framework of  multi-lane unsignalized intersection cooperation method. Section~\ref{conf} and Section~\ref{forma} propose the distribution calculation method and the formation reconfiguration method respectively. Section~\ref{simu} carries out simulations and Section~\ref{conc} gives the conclusion.

%
\section{Framework of unsignalized intersection management}
\label{frame}
%

In this section, we introduce the scenarios studied in this paper and propose the framework for multi-lane unsignalized intersection management with flexible lane direction.

\subsection{Scenarios}
\label{sce}

The scenario studied in this paper is the multi-lane unsignalized intersection with flexible lane direction. An example of three-lane intersection where all incoming lanes allow variable turning is shown in Fig.~\ref{scenarioFig}. A vehicle is allowed to turning right, going straight, and turning left on every incoming lane. The area where four arms of intersections overlap is conflict zone, shown as the area inside the dashed lines in Fig.~\ref{scenarioFig}. The conflict zone is also the area inside the stop line of intersections. 

The incoming lanes are marked in gray. Lanes are indexed by integers from right to left beginning with one, from the perspective of vehicles. In order to simplify and regulate routes of vehicles, the assumption is made that vehicles can only travel to the lane with the same index when passing the conflict zone of an intersection, and the number of lanes of each arm is the same.

\begin{figure}
\begin{center}
    \includegraphics[width=0.35\linewidth]{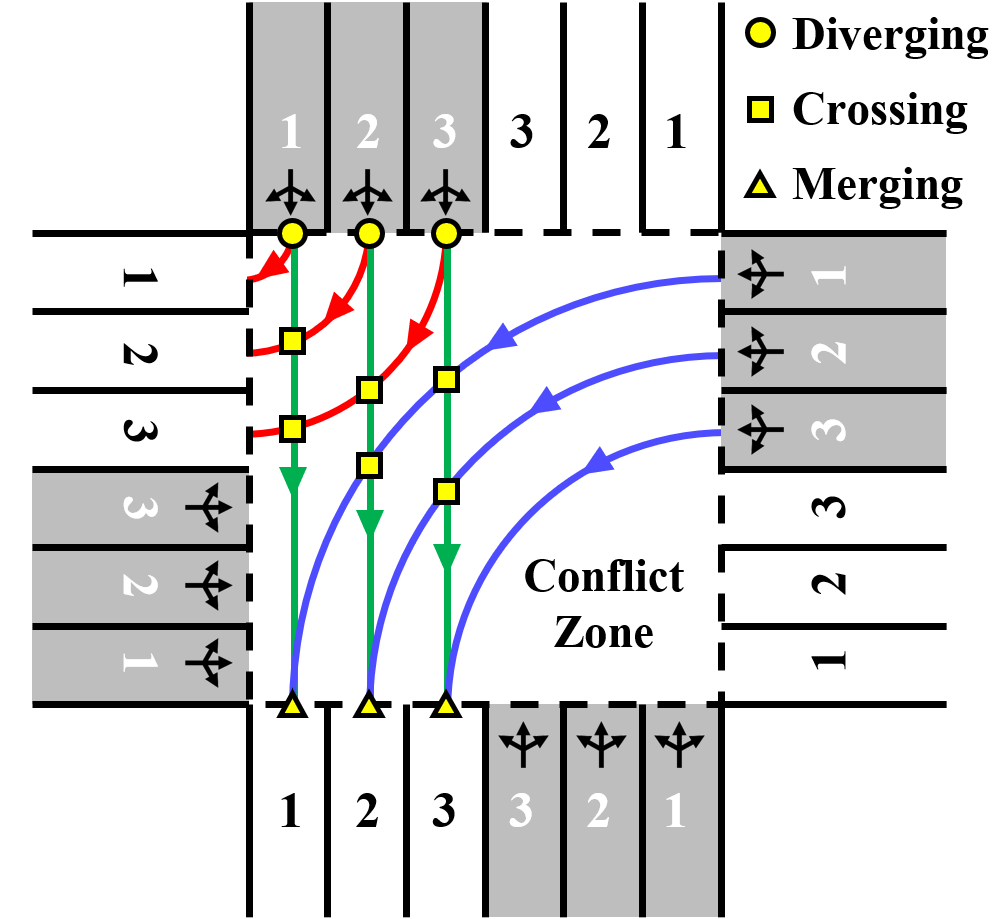}
    \caption{A three-lane intersection where variable turning is allowed for all the incoming lanes.}
    \label{scenarioFig}
\end{center}
\end{figure}

There are three types of traffic movements from one incoming arm, including right-turning movement, straight-going movement, and left-turning movement, and twelve movements in total from all the four arms. Three attributions are defined for each vehicle to describe their movement: $r$, $t$ and $l$. $r$ represents the arm where the vehicle comes from, and is set to 1, 2, 3, and 4 if the vehicle is on the north, east, south, and west arm respectively. $t$ represents the turning behavior, and is set to 1, 2, and 3 if the vehicle is going to turn right, go straight, and turn left respectively. $l$ represents the lane that the vehicle is occupying, and is set equal to the index of the lane. Then, in order to differentiate traffic movements of different vehicle, the movement variable $M$ of a vehicle is calculated as:
\begin{eqnarray}
\label{movement}
M=3n^\text{L}(r-1)+n^\text{L}(t-1)+l,
\end{eqnarray}
where $n^\text{L}$ represents the number of lanes on each arm, and there are $12n^\text{L}$ different movements in total. The definition of movement of vehicles in (\ref{movement}) will be used to check conflict between vehicles later in Section~\ref{conflict}.

\subsection{Framework}
\label{framework}

In order to control vehicles to safely pass an intersection without collision with others, we need to determine two issues for every single vehicle: which lane to occupy, and when to pass the conflict zone. Assuming that vehicles pass the conflict zone with a constant speed~\citep{xu2018distributed,chen2021conflict}, the issue of determining when to pass the conflict zone is equal to determining the relative longitudinal desired position of vehicles. Then the key of controlling vehicles to safely pass an intersection becomes: calculating longitudinal and lateral relative position of vehicles, guiding them to drive to their desired position, and control them to pass the conflict zone with a constant and safe speed.

For realizing the above steps, we propose a framework for multi-lane unsignalized intersection management with flexible lane direction. The framework consists of three parts: distribution calculation, formation reconfiguration, and vehicle control. 

\begin{enumerate}[(1)]
\item Distribution calculation. Both the longitudinal (passing order and following distance) and lateral (lane occupation) distribution of vehicles are calculated in this part, considering achievability (whether a vehicle is able to catch up with a target position before reaching the stop line). The calculated distribution guarantees that if all vehicles travel with the same constant speed, they can pass the conflict zone without collision.
\item Formation reconfiguration. In this part, all the vehicles driving on the same incoming arm are considered as a formation, and paths for them to travel to their desired position are calculated. A path consists of a sequence of points for a vehicle to follow, and guides vehicles to arrive at their desired position without collision.
\item Vehicle control. In order to realize reconfiguration of formation, vehicles plan their trajectories to pass through sequences of points, and calculate their control inputs with spatiotemporal constraints.
\end{enumerate}

The distribution calculation part is realized by the proposed iterative grouping algorithm, and will be discussed in Section.~\ref{conf}. The formation reconfiguration and vehicle control problems are solved by relative motion planning and control of vehicles, and will be explained in detail in Section.~\ref{forma}.

\begin{remark}
The proposed intersection management framework is applicable to unsignalized intersections either with or without flexible lane direction. The difference of applying it to intersections with and without flexible lane direction lies in the available position selection for vehicles in the distribution calculation part, by defining selectable lanes for vehicles to occupy when passing the conflict zone.
\end{remark}

%
\section{Conflict-free Distribution Calculation}
\label{conf}
%

In this section, we define conflict relationship between vehicles and propose a conflict-free distribution calculation method to spatially arrange vehicles to avoid collision when passing the conflict zone.

\subsection{Conflict definition}
\label{conflict}

A conflict represents that the motion of vehicles may lead to a collision. In this section, we focus on the conflict relationship between vehicles in the conflict zone of intersections. Previous research has defined conflicts in intersections with fixed lane direction~\citep{xu2018distributed}. In this paper, since lanes are set free for all kinds of turning, the conflict relationship is more complicated. As shown in Fig.~\ref{scenarioFig}, three types of conflict are defined:

\begin{enumerate}[(1)]
\item Crossing conflict. If the trajectories of two vehicles in the conflict zone cross, the two vehicles are in crossing conflict. A crossing conflict may exist between any two types of trajectories, including right-turning, straight-going, and left-turning, shown as the yellow rectangles in Fig.~\ref{scenarioFig}.
\item Diverging conflict. If the trajectories of two vehicles in the conflict zone start from the same lane, the two vehicles are in diverging conflict, shown as the yellow circles in Fig.~\ref{scenarioFig}.
\item Merging conflict. If the trajectories of two vehicles in the conflict zone end at the same lane, the two vehicles are in merging conflict, shown as the yellow triangles in Fig.~\ref{scenarioFig}.
\end{enumerate}

If two vehicles that are in conflict pass the conflict zone at the same time, a collision may happen. Based on the definition of movement of vehicles in (\ref{movement}), the conflict matrix is defined to check if vehicles with movement $i$ and $j$ are in conflict. The conflict matrix $\mathcal{C}$, whose element on the $i$-th row and the $j$-th column represents the conflict relationship between movement $i$ and $j$, is defined as:
\begin{eqnarray}
\mathcal{C}=[c_{i,j}]\in \mathbb{R}^{12n^\text{L}\times 12n^\text{L}},\ c_{i,j}=\begin{cases}
1, \ \text{if vehicle $i$ and $j$ are in conflict},\notag \\
0, \ \text{otherwise},
\end{cases},\ i,j=1,2,3,...,12n^\text{L}.
\end{eqnarray}

The number of rows and volumes of $\mathcal{C}$ is the same as the number of movements. After defining conflict relationship, the key is to schedule vehicles in conflict to pass the conflict zone at different time to prevent collision.

\subsection{Distribution calculation method}
\label{distribution}

Based on the defined conflict relationship, we need to calculate longitudinal and lateral distribution of vehicles to let them pass the conflict zone without collision. In order to regulate motion of vehicles and simplify planning, we make the following assumptions:

\begin{enumerate}[(1)]
\item Vehicles that are in conflict don't exist in the conflict zone simultaneously. 
\item Vehicles that are not in conflict and are planned to pass the conflict zone simultaneously arrive at the stop line at the same time.
\end{enumerate}

According to the above assumptions, vehicles are scheduled into layers. Vehicles that are in conflict are scheduled into different layers, and vehicles in the same layer have no conflict and pass the conflict zone simultaneously. The longitudinal gap between two layers is designed to let the former layer completely pass the conflict zone before the latter layer reach the stop line. Some methods have been proposed to calculate layers for vehicles in intersections with fixed lane direction,~\eg, depth-first spanning tree method, and maximum matching method. In this paper, we proposed the Iterative Grouping Algorithm (IGA) to calculate layers for vehicles in intersections with flexible lane direction, shown as Algorithm~\ref{iga}.

\begin{algorithm}[tb]
\label{iga}
\caption{The Iterative Grouping Algorithm (IGA)}  
\LinesNumbered  
\KwIn {Vehicle sequence with ID from 1 to $N$.
}
\KwOut{Lane distribution $\mathbb{L}$ and layer distribution $\mathbb{V}_k (k=1,2,3,...)$ for vehicles.
} 
\textbf{Initialization} Set $k\leftarrow0$, and set $\mathbb{L}$ as the current lane occupation of all the vehicles.\\
Calculate available layers for each vehicle.\\
\While{There is a vehicle not assigned to any layer.}{
Set $k\leftarrow k+1$ and $\mathbb{V}_k\leftarrow \emptyset$.\\
\For{$i\leftarrow 1$ \KwTo $N$}{
\If{Vehicle $i$ is not assigned to any layer {\bf and} layer $k$ is available for vehicle $i$.}{
Search all the possible lane occupation for vehicle $i$ and $\mathbb{V}_k$ and check conflict.\\
\If{Conflict-free lane occupation is found for vehicle $i$ and $\mathbb{V}_k$.}{
Set $\mathbb{V}_k\leftarrow \mathbb{V}_k\cup i$.\\
Update $\mathbb{L}$ as the one that has the minimum difference $e$ with the current $\mathbb{L}$.
}
}
}
}
\end{algorithm}  

The input of IGA is a sequence of vehicles with ID indexed from $1$ to $N$, and the output is conflict-free longitudinal (layer assignment) and lateral (lane occupation) distribution of vehicles. IGA firstly calculates available layers for all the vehicles, considering their maximum acceleration ability and the distance towards the conflict zone. The detail of calculating available layers for vehicles will be discussed later in Section~\ref{bilevel}. Then, IGA iteratively assigns vehicles to layers and check their conflict relationship. A new vehicle is assigned to a layer if it is not in conflict with any other vehicles already in the layer, considering their lane occupation. Finally, IGA outputs the conflict-free distribution of the $N$ vehicles. It is important to notice that IGA ignores vehicles' motion to avoid collision when calculating distribution, thus sometimes the calculated distribution is not feasible to conduct because some vehicles have to take extra steps to clear the way for other vehicles. The collision-free motion planning for multiple vehicles will be discussed in Section~\ref{forma}. If a distribution is found not feasible to conduct by multi-vehicle motion planning, the available layers for vehicles is adjusted and IGA is run again to calculate a new distribution. 

In IGA, the set $\mathbb{L}$ represents the lane occupation of vehicles, where the $i$-th element $\mathbb{L}_i$ represents the lane index that vehicle $i$ should occupy. From line 7 to line 11, IGA calculates lane occupation for vehicles. In line 10, the difference between two lane occupation results $\mathbb{L}$ and $\mathbb{L}'$ is calculated as follows:
\begin{eqnarray}
e=\sum_{i=1}^{N} (\mathbb{L}_i-\mathbb{L}'_i)^2
\end{eqnarray}

The lane occupation result that has the minimum difference with the current occupation is updated as the new lane occupation, in order to reduce unnecessary lane changes of vehicles.

\begin{theorem}
\label{igaproof}
The IGA has the following properties:
\begin{enumerate}[(1)]
\item Every vehicle will be assigned to a layer.
\item Vehicles that are assigned to the same layer are not in conflict with each other.
\item The algorithm will end in finite steps.
\end{enumerate}
\end{theorem}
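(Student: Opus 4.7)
The plan is to verify the three claims separately, exploiting the loop invariants of Algorithm~\ref{iga}. Claim (2) follows essentially by construction: a vehicle $i$ is added to $\mathbb{V}_k$ in line 9 only after the conflict check in line 8 succeeds, which by definition of that check means a lane occupation exists under which $i$ has no conflict with any member already present in $\mathbb{V}_k$. Since $\mathbb{V}_k$ starts empty and the check is made against every current member at each insertion, a straightforward induction on the order of insertions into $\mathbb{V}_k$ establishes pairwise conflict-freeness (in the sense of the conflict matrix $\mathcal{C}$) at the end of the inner for loop.

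For claims (1) and (3), I would treat them together, since termination of the outer while loop in finitely many iterations automatically forces every vehicle to be assigned (by the stopping condition on line 3). The key lemma to prove is: at each outer iteration with index $k$, either $\mathbb{V}_k$ is left empty, or at least one previously unassigned vehicle is added to $\mathbb{V}_k$. The second alternative is immediate, because as soon as we encounter in the inner for loop the first unassigned vehicle $i$ for which layer $k$ is available, $\mathbb{V}_k$ is still empty at that moment, the conflict check in line 8 passes vacuously (with $i$'s current lane occupation), and $i$ is assigned.

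It therefore suffices to rule out the first alternative repeating forever. For this I would invoke the physical monotonicity of available layers: any vehicle that can reach the stop line in time for layer $k$ can also do so for every layer $k' > k$, simply by decelerating and delaying arrival. Thus for each vehicle $i$ there is a minimum available index $k^{\min}_i$, and every $k \geq k^{\min}_i$ is available for $i$. Setting $k^{*} = \max_i k^{\min}_i$ over the finitely many unassigned vehicles remaining at any point, every layer $k \geq k^{*}$ is available for every still-unassigned vehicle, so the second alternative applies at all such iterations and strictly decreases the count of unassigned vehicles. Hence the outer loop halts after at most $k^{*} + N$ iterations, giving (3), and by the stopping condition every vehicle has been assigned, giving (1).

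The hard part will be making the monotonicity-of-availability assumption precise. The excerpt defers the rule for computing available layers to Section~\ref{bilevel}, so the argument leans on the natural but not-yet-formalized property that a vehicle can always postpone its arrival at the stop line. Without this property the algorithm could, in principle, keep producing empty layers forever while some stubborn vehicle remained permanently unassigned; pinning down exactly why the assumed vehicle dynamics (bounded acceleration, fixed passing speed, finite distance to the stop line) preclude that pathology is the crux of the finite-termination argument, with the rest of the proof being essentially bookkeeping on the algorithm's structure.
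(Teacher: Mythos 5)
Your proof is correct, but it reaches the result by a genuinely different route than the paper. The paper proves property (1) directly by a per-vehicle case analysis: in the worst case a vehicle conflicts with every other vehicle, and since the other vehicles occupy only finitely many layers, after checking those layers it is assigned to the next available one; property (2) is proved by construction essentially as you do; property (3) is then deduced from (1) together with the finiteness of the vehicle set. You invert this: you prove termination first via the progress lemma that each outer iteration $k$ either leaves $\mathbb{V}_k$ empty or assigns at least one vehicle (the first unassigned vehicle for which layer $k$ is available enters the then-empty layer, the conflict check passing vacuously), and you recover (1) from the loop guard; the explicit bound of $k^{*}+N$ iterations is something the paper never states. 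Your version also confronts the one point the paper's proof passes over silently, namely the availability condition in line 6 of Algorithm~\ref{iga}. The monotonicity you flag as the ``hard part'' is in fact built into the paper's definition of availability in Section~\ref{bilevel}: equation~(\ref{forward}) gives the most forward achievable layer $N^{\text{y}}_i$, and all layers behind it are declared available, so every layer with index at least $k^{\min}_i$ is available by definition and your lemma needs no separate argument about decelerating dynamics. One small imprecision in your argument for (2): line 7 of IGA searches lane occupations jointly for vehicle $i$ and the vehicles already in $\mathbb{V}_k$, so an insertion may re-assign lanes of earlier members of the layer; the induction should therefore be phrased as ``each successful insertion certifies a conflict-free occupation of the whole enlarged layer, and later iterations only modify lanes of vehicles in later layers,'' which preserves your conclusion.
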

\begin{proof}
The three properties are proved as follows:
\begin{enumerate}[(1)]
\item IGA iteratively assigns vehicles to layers and there are three situations:
\begin{enumerate}[a)]
\item The layer that a vehicle is trying to be assigned to is empty, so that no conflict will exist and the vehicle is successfully assigned to the layer.
\item The new vehicle is not in conflict with any other vehicles that are already in the layer, and the vehicle is successfully assigned to the layer.
\item Conflict exists between the new vehicle and other vehicles that are already in the layer, and this vehicle will not be assigned to this layer. Now we consider the worst case that the vehicle is in conflict with all the other vehicles in the intersection, so that it can't stay in the same layer with any other vehicle. Since the number of vehicles is finite in an intersection, the number of layers that the other vehicles are assigned to is also finite. After checking those finite number of layers, the new vehicle will be successfully assigned to the next available layer.
\end{enumerate}
This completes the proof of the first property.

\item This property is guaranteed by the assigning process. A vehicle will be assigned to a layer if:
\begin{enumerate}[a)]
\item The layer is empty and available for this vehicle. In this case, only one vehicle exists in the layer, and no conflict exists.
\item The vehicle is not in conflict with the vehicles that are already in the layer. In this case, no conflict exists between the new vehicles and the currently existing vehicles.
\end{enumerate}
Since vehicles are assigned to layers one by one, no conflict will exist between any two vehicles that are in the same layer. This completes the proof of the second property.

\item The algorithm will end if all the vehicles have been assigned to certain layers. As is provided in Proof (1), every vehicle will be assigned to a layer after finite steps, and the number of vehicles is finite. Thus, the total number of steps that are needed to assign all the vehicles to certain layers is also finite. This completes the proof of the third property.

\end{enumerate}

\end{proof}

\subsection{Example}
\label{example1}

An example is provided to show the result of IGA. A sequence of ten vehicles is given as input, and their traffic movement types are shown in Fig.~\ref{vehicleSequence}. Note that when calculating the distribution of vehicles for the example, we ignore the reachability of vehicles and all layers are available. 

\begin{figure}[b]
\begin{center}
    \includegraphics[width=0.5\linewidth]{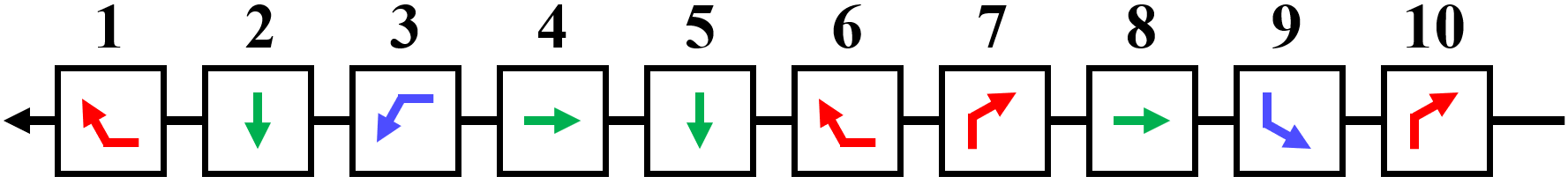}
    \caption{A sequence of ten incoming vehicles. The numbers are ID of vehicles and their traffic movements are shown in the black rectangles. Vehicles with the smaller ID are closer to the stop line of the intersection.}
    \label{vehicleSequence}
\end{center}
\end{figure}

In the intersection scenario with flexible lane direction, these vehicles are assigned into two layers, as shown in Fig.~\ref{variableresult}. In the first layer, seven vehicles including four right-turning vehicles, two straight-going vehicles, and one left-turning vehicle have no conflict and pass the conflict zone simultaneously. In the second layer, the other three vehicles, including two straight-going vehicles and one left-turning vehicle, have no conflict and pass the conflict zone simultaneously. In the intersection with fixed lane direction, where only one type of turning is allowed for each lane, four layers are needed to evacuate the ten vehicles, as shown in Fig.~\ref{fixedresult}. This indicates that cooperation at intersections with flexible lane direction can achieve fewer layers, shorter passing time, and higher efficiency.

%
\section{Multi-vehicle Formation Reconfiguration and Control}
\label{forma}
%

After being assigned to layers and lanes, vehicles need to plan trajectories to drive to their desired position. Collision may happen between vehicles that are driving on the same arm. Since the starting position and desired position of vehicles are known, we can utilize multi-vehicle formation reconfiguration method~\citep{cai2021formationc} to solve the conflict-free motion planning problem, by considering vehicles on the same arm as a group.

\begin{figure}
\begin{center}
    \subfigure[Distribution of vehicles]{
    \includegraphics[width=0.3\linewidth]{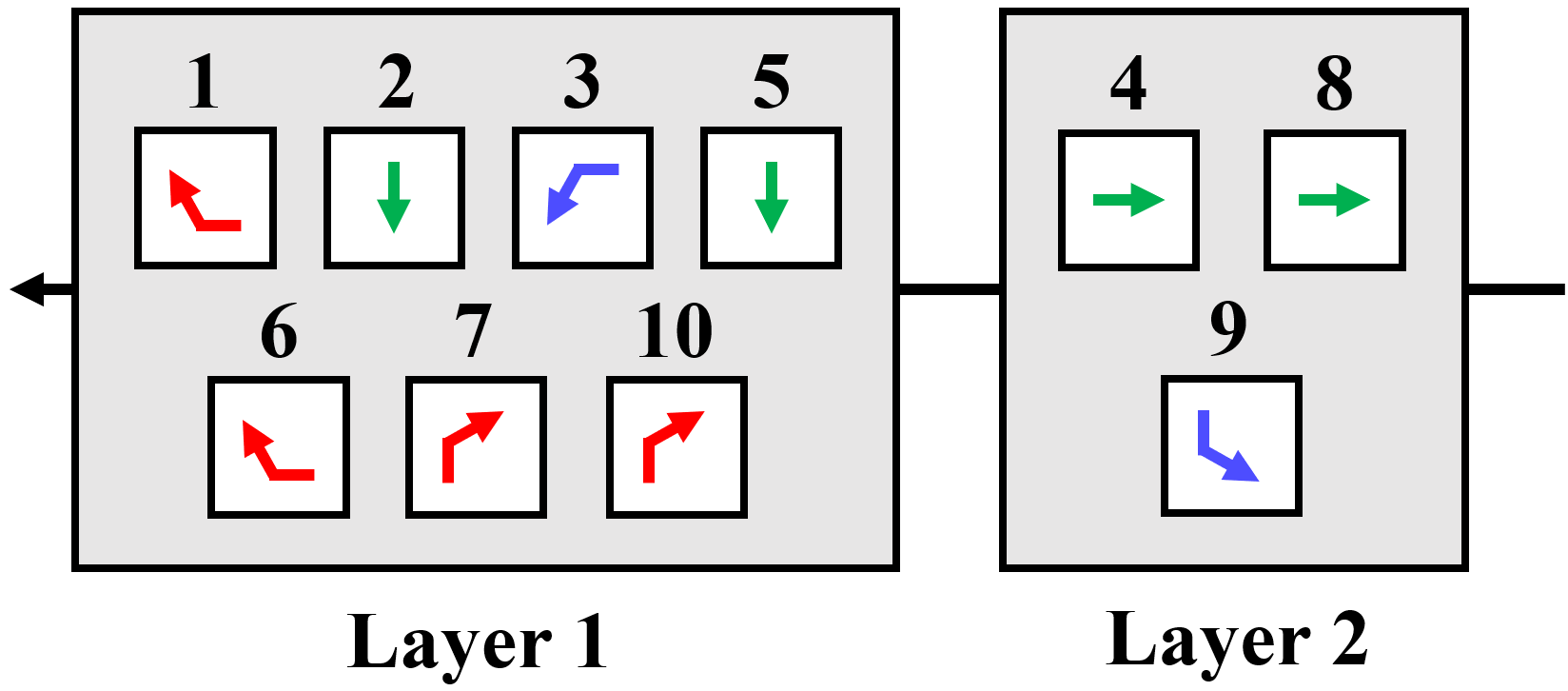}
    \label{distribution1}}
    \hspace{3mm}
    \subfigure[Layer 1]{
    \includegraphics[width=0.13\linewidth]{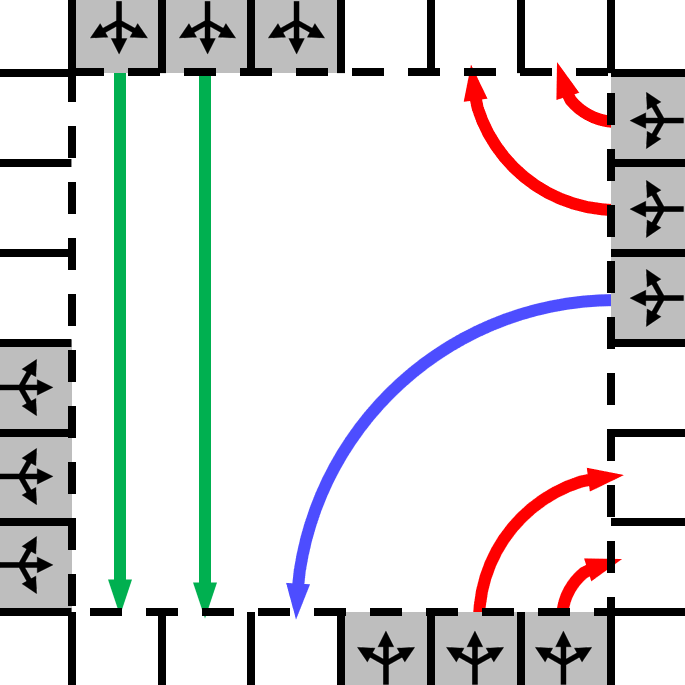}
    \label{layer11}}
    \hspace{3mm}
    \subfigure[Layer 2]{
    \includegraphics[width=0.13\linewidth]{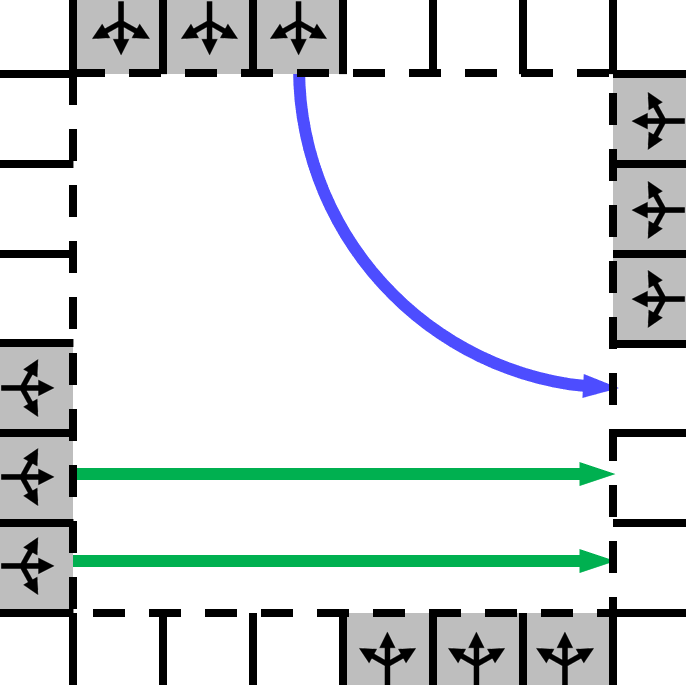}
    \label{layer12}}
    \caption{Distribution results of vehicles at intersections with flexible lane direction.}
    \label{variableresult}
\end{center}
\end{figure}

\begin{figure}
\begin{center}
    \subfigure[Distribution of vehicles]{
    \includegraphics[width=0.35\linewidth]{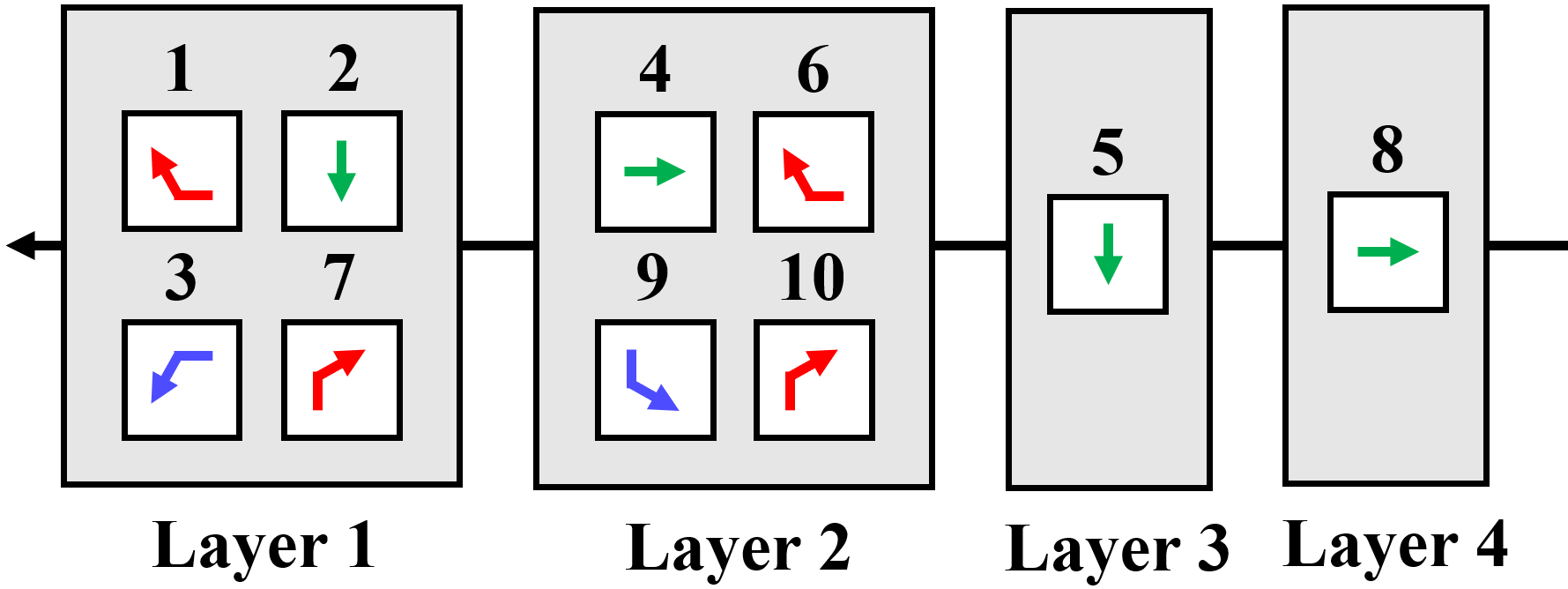}
    \label{distribution2}}
    \subfigure[Layer 1]{
    \includegraphics[width=0.13\linewidth]{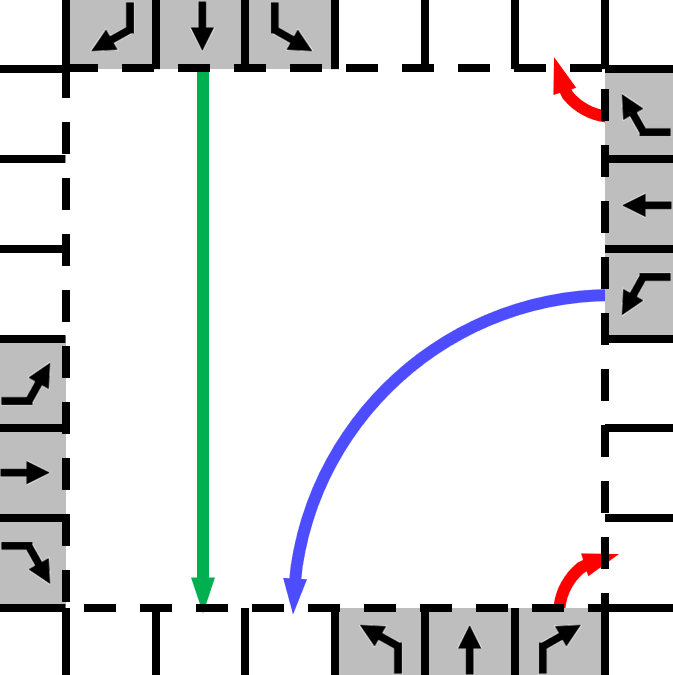}
    \label{layer21}}
    \subfigure[Layer 2]{
    \includegraphics[width=0.13\linewidth]{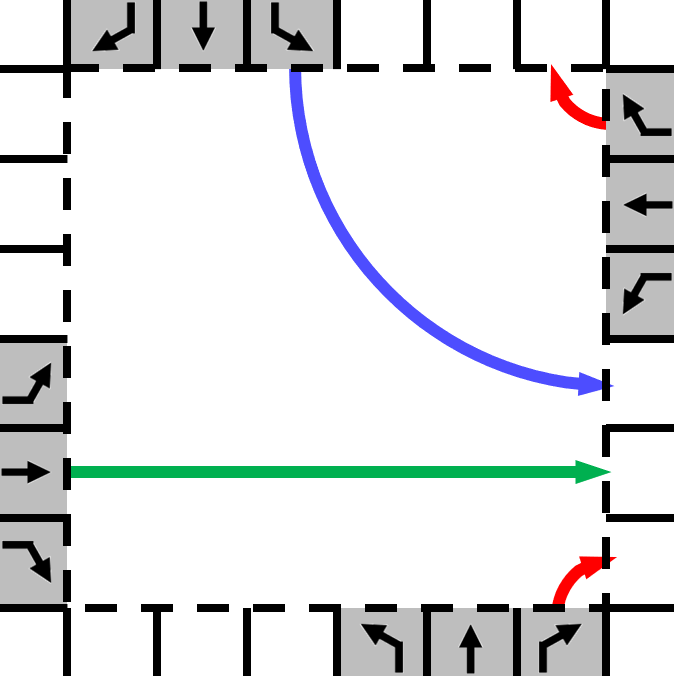}
    \label{layer22}}
        \subfigure[Layer 3]{
    \includegraphics[width=0.13\linewidth]{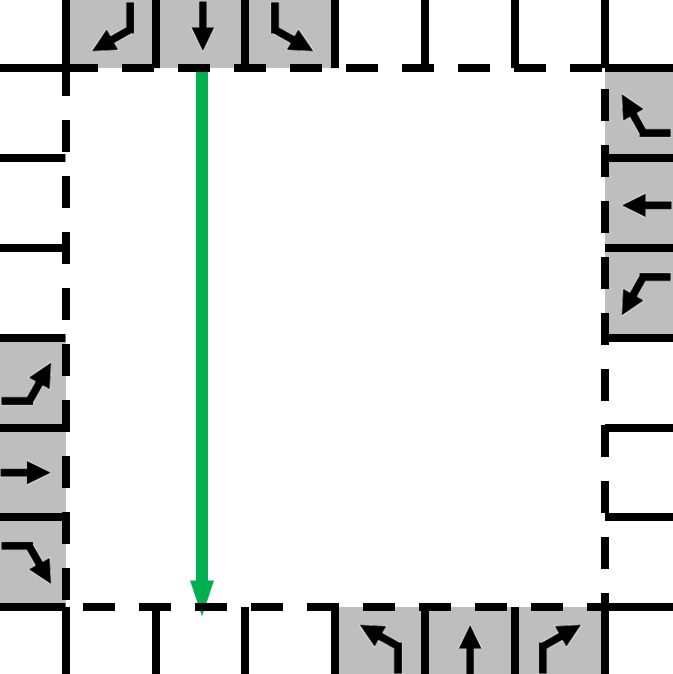}
    \label{layer23}}
        \subfigure[Layer 4]{
    \includegraphics[width=0.13\linewidth]{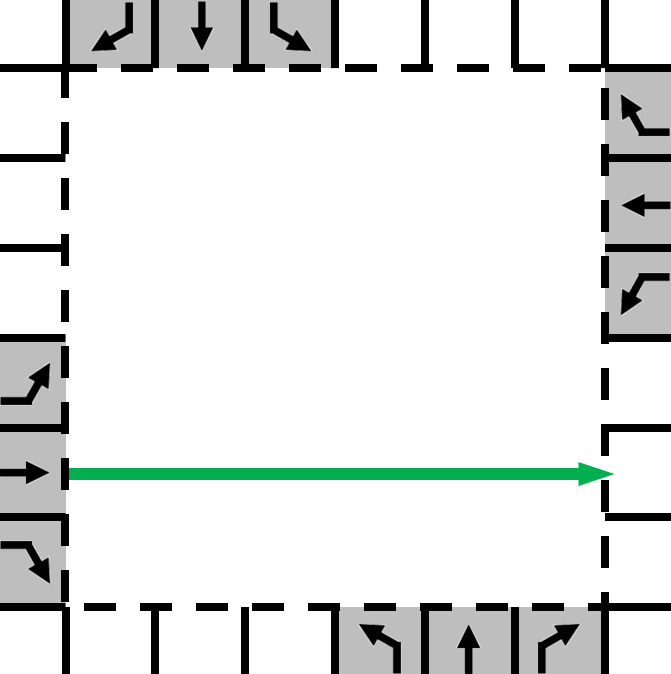}
    \label{layer24}}
    \caption{Distribution results of vehicles at intersections with fixed lane direction.}
    \label{fixedresult}
\end{center}
\end{figure}

\subsection{Bi-level formation control framework}
\label{bilevel}

The bi-level formation control framework is adopted to plan conflict-free trajectories for vehicles to catch up with the planned distribution, where relative motion planning is performed in the upper level to  avoid collision between vehicles, and trajectory planning with spatiotemporal constraints is conducted in the lower level to calculate control inputs for vehicles. The upper-level planning is performed in Relative Coordinate System (RCS), which is a dynamic coordinate system moving with vehicles. In RCS, position of vehicles is discretized by lanes laterally and by a constant gap $d_\text{F}$ longitudinally. Every relative point in RCS corresponds to a road point in Geodetic Coordinate System (GCS).

In order to prevent collision, time is also discretized by constant time intervals $T_\text{F}$. Vehicles occupy relative points in RCS with integer coordinates at fixed time points. During time period between two adjacent time points, vehicles can move from one relative point to another, or maintain their current relative position. If there are five available relative points for a vehicle at each time step (four points around and its current point), the motion map is four-connected. Given the relative position $(x^{\mathrm{r,v}}_{i,j},y^{\mathrm{r,v}}_{i,j})$ of vehicle $i$ at step $j$, the position $(x^{\mathrm{r,v}}_{i,j+1},y^{\mathrm{r,v}}_{i,j+1})$ at step $j+1$ is limited by:
\begin{eqnarray}
\label{4connected}
|x^{\mathrm{r,v}}_{i,j+1}-x^{\mathrm{r,v}}_{i,j}|+|y^{\mathrm{r,v}}_{i,j+1}-y^{\mathrm{r,v}}_{i,j}|\leq1.
\end{eqnarray}

If there are nine available relative points for a vehicle at each time step (eight points around and its current point), the motion map is eight-connected, and the motion of vehicles is regulated by:
\begin{eqnarray}
\label{8connected}
|x^{\mathrm{r,v}}_{i,j+1}-x^{\mathrm{r,v}}_{i,j}|\leq1,\\
|y^{\mathrm{r,v}}_{i,j+1}-y^{\mathrm{r,v}}_{i,j}|\leq1.
\end{eqnarray}

Denote the distance from the current position of a vehicle towards the stop line of the intersection as $d_i$ and desired speed of the formation as $v_\text{F}$, the maximum number of layers $N^\text{y}_i$ that vehicle $i$ can accelerate to take forward before reaching the stop line is:
\begin{eqnarray}
\label{forward}
N^\text{y}_i=\lfloor \frac{d_i}{v_\text{F}T_\text{F}+d_\text{F}}\rfloor,
\end{eqnarray}
where $\lfloor\cdot\rfloor$ is used to calculate the nearest integer that is not bigger than a number. Layers that are behind the most forward achievable layer for a vehicle are its available layers. The theoretical most forward available layers for all the vehicles is the most forward layer for the vehicle that is the closest to the stop line. Thus, the point whose relative coordinate $x^\mathrm{r}$ is 0 is set on this layer. Suppose that ID of the closest vehicle is 0, the relative coordinate $x^\mathrm{r,v}_i$ for vehicle $i$ is calculated as:
\begin{eqnarray}
x^\mathrm{r,v}_i=\left\langle\frac{d_i-d_0}{d_\text{F}}\right\rangle+N^\text{y}_0,
\end{eqnarray}
where $\left\langle\cdot\right\rangle$ is used to calculate the nearest integer of a number. Note that different vehicles may have the same $x^\mathrm{r,v}$ if they are close to each other longitudinally. In this case, they are forced to drive to identical relative points before the algorithm begins.

The relationship between relative states and real-world states are shown in Fig.~\ref{bilevelprocess}, where an example of movement during one time interval is provided. The red arrow in the upper part of Fig.~\ref{bilevelprocess} shows an oblique move (if the relative motion map is eight-connected), where the vehicle moves from relative coordinate (2,3) to (1,2) to join the other four vehicles and forms an interlaced five-vehicle formation. 

\begin{figure}
\begin{center}
    \includegraphics[width=0.5\linewidth]{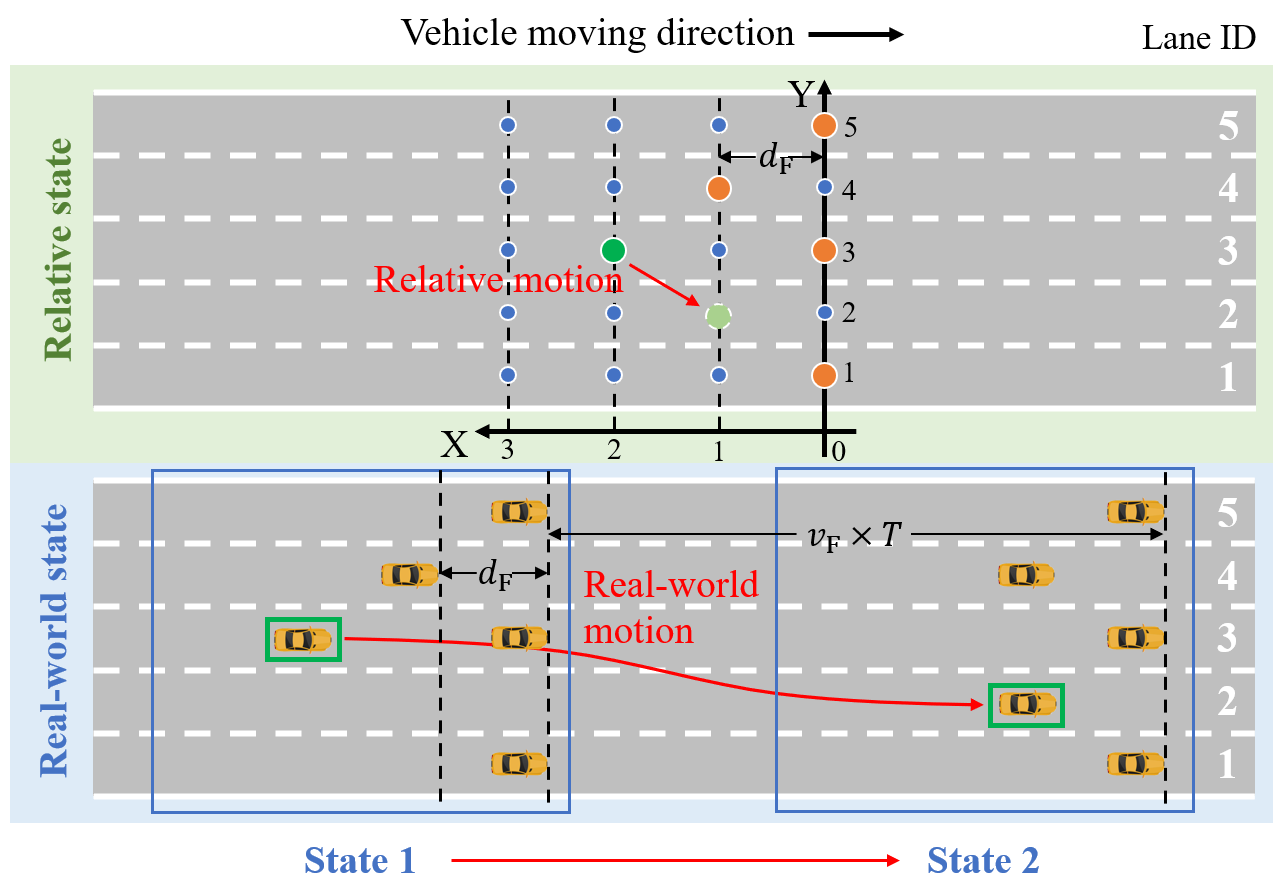}
    \caption{The bi-level formation control framework.}
    \label{bilevelprocess}
\end{center}
\end{figure}

Given the current and desired position of vehicles, we need to calculate collision-free trajectories for those vehicles to travel to their target position. This process is divided into two parts: the first part is to calculate conflict-free relative paths for vehicles in RCS, and the second part is to plan trajectories for vehicles to pass through those points one by one with spatiotemporal constraints. 

\subsection{Relative motion planning}
\label{relative}

The Conflict-based Searching (CBS) method is utilized to calculate collision-free relative paths for vehicles in RCS~\citep{cai2021formationc}. CBS method iteratively plan relative path for single vehicle, detect conflicts between vehicles, and re-plan paths for vehicles with constraints. A conflict exists between two vehicles and may lead to a collision. A constraint prevents a certain movement of a vehicle. The searching process is shown in the left part of Fig.~\ref{cTree}, where each node represents a complete cycle of single-vehicle path planning for all the vehicles in the formation. Conflicts are detected for each node and are added as constraints for the child nodes. Each node is evaluated by the total cost for vehicles to travel to their targets. The cost can be defined as the total travelling time, total travelled distance, etc. When finding a node without any conflict, a feasible conflict-free path solution is got. If all the other unsearched nodes have higher cost than a feasible node, this node is claimed to be the optimal solution with the minimum cost.

For each node in the searching tree, individual path planning is performed for every vehicle, considering the constraints. The path planning is conducted without considering other vehicles, and may cause conflicts, which is then added as the attributes of the node. The edge between nodes represents the generating order of different nodes. Two nodes are called as parent node and child node if there is a directed edge connecting them and starting from the parent node to the child node. All the constraints of a parent node should also be added to its child node, and one of the conflicts of the parent node will be added as a new constraint to its child node. That is to say, the number of child nodes of a node is equal to the number of the parent node's conflicts. Note that not every node is solid, because there may be conflicts and the corresponding paths calculated for the node will cause collision.

The tree starts from the root node $N_1$, which doesn't have constraints since it has no parent node. By applying single-vehicle path planning method to the initial and target position of vehicles, a path set, a conflict set, and a cost is got. The single-vehicle path planning algorithm utilized in this paper is the A* algorithm~\citep{hart1968formal}. Child nodes are generated to the root node and the tree iterates the process until an optimal solution is found. 

The CBS method is well known and widely used in the field of multi-robot path planning. The properties of CBS method including completeness, optimality, and time complexity have been proved and explained in numerous papers~\citep{sharon2013increasing,sharon2015conflict}, and we omit them in this paper.

\begin{figure}
\begin{center}
    \includegraphics[width=0.7\linewidth]{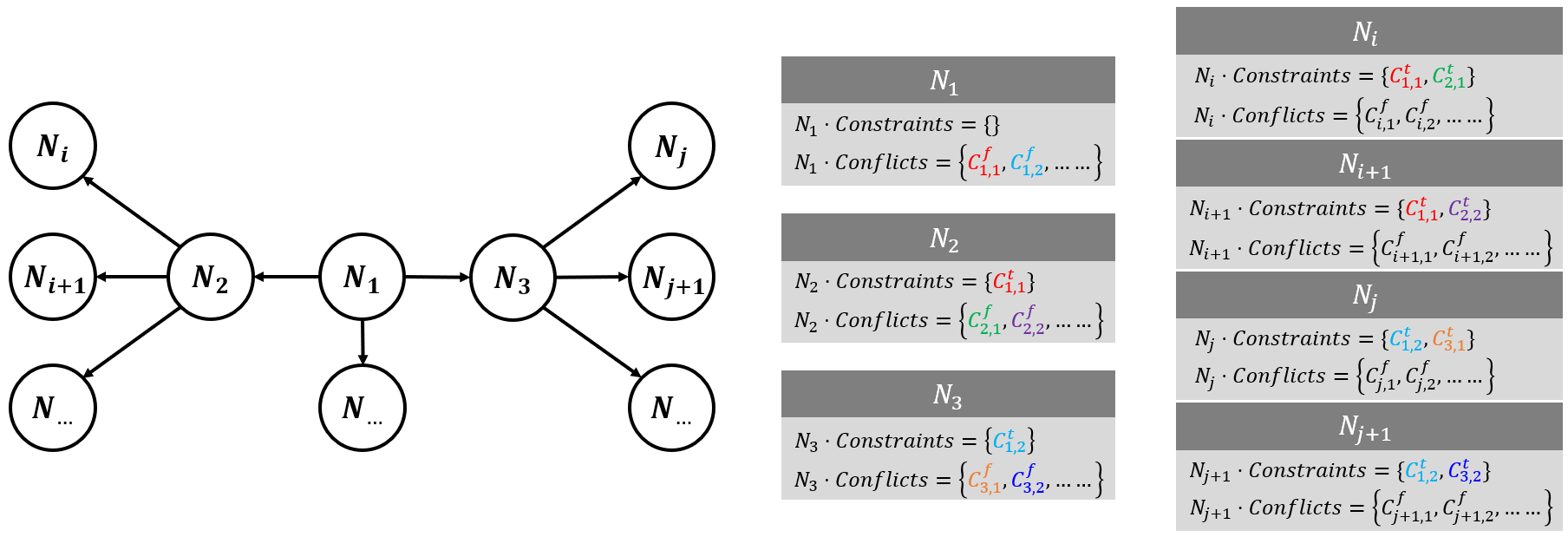}
    \caption{The process of conflict-based searching.}
    \label{cTree}
\end{center}
\end{figure}

\subsection{Multi-stage vehicle control}
\label{multistage}

Vehicles arrive at relative points at each $t_i$ in $\mathbb{T}=\{ t_i | t_i=iT, i=0,1,2,3...\}$ in RCS. B$\acute{\text{e}}$zier curves are generated for the vehicles to pass through the corresponding road points in GCS. The key is to calculate longitudinal and lateral control inputs for the vehicles to arrive at the desired position at desired time. The trajectory in GCS passes through a sequence of road points and the inaccuracy during the trajectory following process may accumulate by time. To prevent severe following error which may lead to collision, a multi-stage motion control framework is proposed and vehicles will replan their control inputs after desired time intervals.

The vehicle model used in this paper is the 2-DOF vehicle model. The center of the rear axle is chosen to represent the position of the vehicle. The yaw angle and the steer angle are represented as $\theta$ and $\delta$ respectively. $L$ represents the wheelbase of the vehicle. The state variable $\emph{\textbf{z}}^\mathrm{v}$ of the vehicle contains $x^\mathrm{v}$, $y^\mathrm{v}$, $v$ and $\theta$, where $v$ represents the velocity of the vehicle. The control input for the vehicle model contains the acceleration $a$ and the steer angle $\delta$, and the state is calculated as:
\begin{eqnarray}
&\dot{\emph{\textbf{z}}^\mathrm{v}}=
\begin{bmatrix} v\sin(\theta) \\ v\cos(\theta) \\ a \\ \frac{v}{L}\tan(\delta) \end{bmatrix},
&\emph{\textbf{z}}^\mathrm{v}=
\begin{bmatrix} x^\mathrm{v} \\ y^\mathrm{v} \\ v \\ \theta \end{bmatrix}.
\end{eqnarray}

For the lateral control, a linear feedback preview controller is designed. The vehicle looks ahead to the preview point and the closest point on the trajectory. The controller calculates the angle between the vehicle velocity and the desired velocity on the preview point, and the distance between the vehicle and the closest point as feedbacks.

The discretized longitudinal state equation is given as:
\begin{eqnarray}
\begin{cases}
&s(k+1)=s(k)+v(k)\text{d}t,\\
&v(k+1)=v(k)+a(k)\text{d}t,
\end{cases}
\end{eqnarray}
where $s(k)$, $v(k)$ and $a(k)$ represent the longitudinal position, speed and acceleration respectively. The time is discretized by sample interval $\text{d}t$. The distance that the vehicle will travel from the starting point to the final point is denoted as $S_t$. Since the vehicle moving in RCS should pass a sequence of relative points, $S_t$ consists of series of segments, whose length are denoted as $S_1$, $S_2$, ..., $S_{N^t}$, where $N^t$ is the number of segments. The time interval for the vehicle to cover each segment is $T_\text{F}$. Taking the control energy of the whole control process as the cost, the discretized longitudinal control can be described as:
\begin{alignat}{2}
\min\quad & \sum_{k=0}^{N^tN^k-1} a^2(k), &{}& \label{eqn - lon}\\
\mbox{s.t.}\quad
&s(0)=0,\notag \\
&s(k_i)=\sum_{n=1}^{i}S_n,\ k_i=iN^k,\ i=1,2,3,..., N^t,\notag \\
&v(0)=v(N^tN^k)=v_\mathrm{F},\notag\\
&v_{\text{min}}\leq v\leq v_{\text{max}},\notag\\
&a_{\text{min}}\leq a\leq a_{\text{max}},\notag
\end{alignat}
where $N^k=\frac{T_\text{F}}{\text{d}t}$ is the steps in one segment, $a(k)$ is the longitudinal control input (acceleration) of the vehicle at step~$k$, $v(k)$ is the longitudinal speed, $v_\mathrm{F}$ is the desired longitudinal speed of the formation, and $v_{\text{min}}$, $v_{\text{max}}$, $a_{\text{min}}$ and $a_{\text{max}}$ are the bounds of speed and acceleration.

The optimal control problem in (\ref{eqn - lon}) is solved and the sequence of control input is calculated to guide the vehicle to travel desired distance at desired time. Since the linear feedback preview controller is designed for lateral control, the inaccuracy of lateral motion may cause deviation for longitudinal motion. In order to resolve the accumulated inaccuracy, the optimal problem is reformed and solved every time the vehicle completes the following of one segment at time $t=iT_\text{F} (i=1,2,3,...,N^t)$. 

\begin{figure}
\begin{center}
    \subfigure[Process of formation reconfiguration]{
    \includegraphics[width=0.7\linewidth]{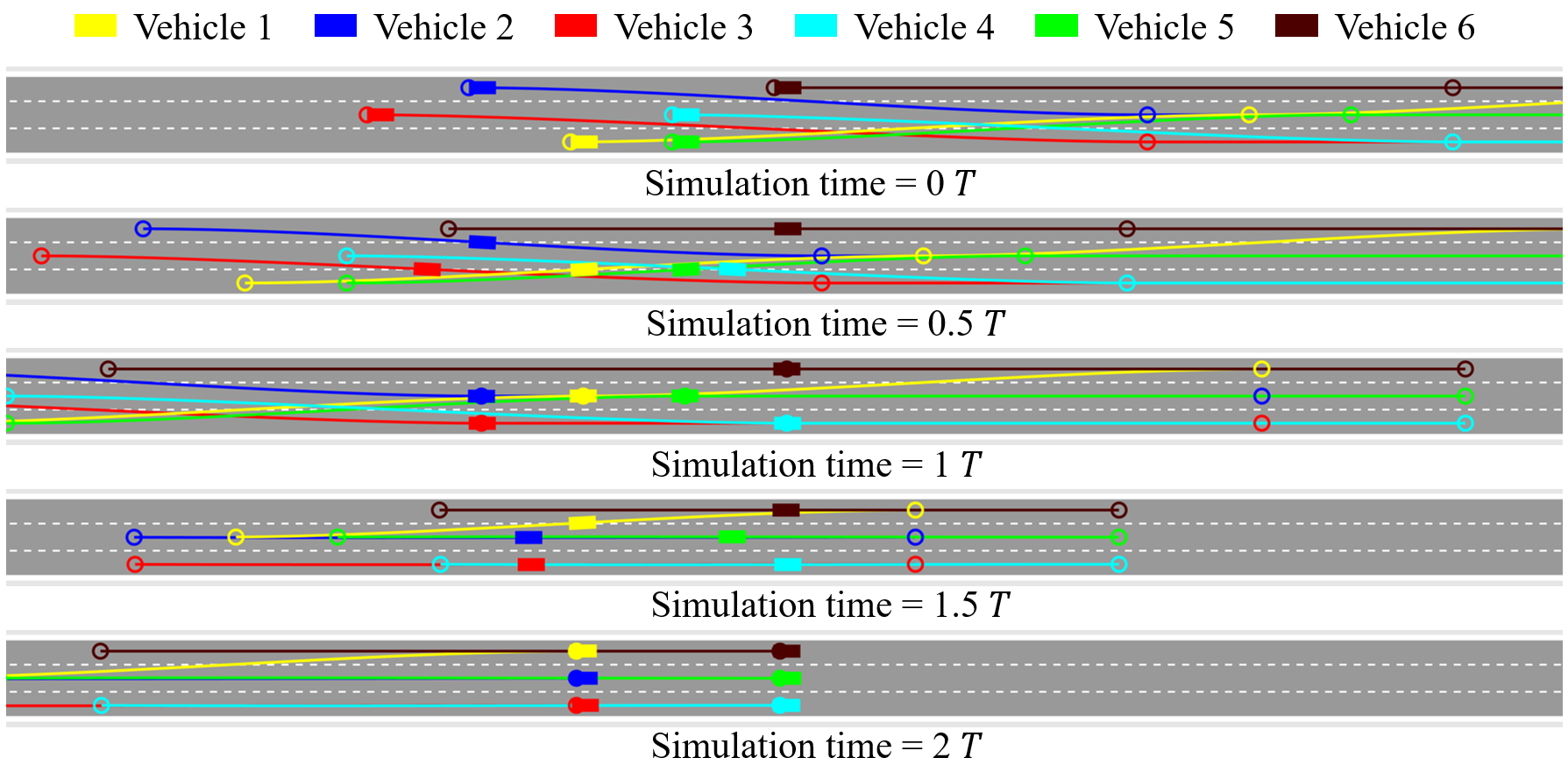}
    \label{case_movingprocess}}\\
    \subfigure[State 1]{
    \includegraphics[width=0.22\linewidth]{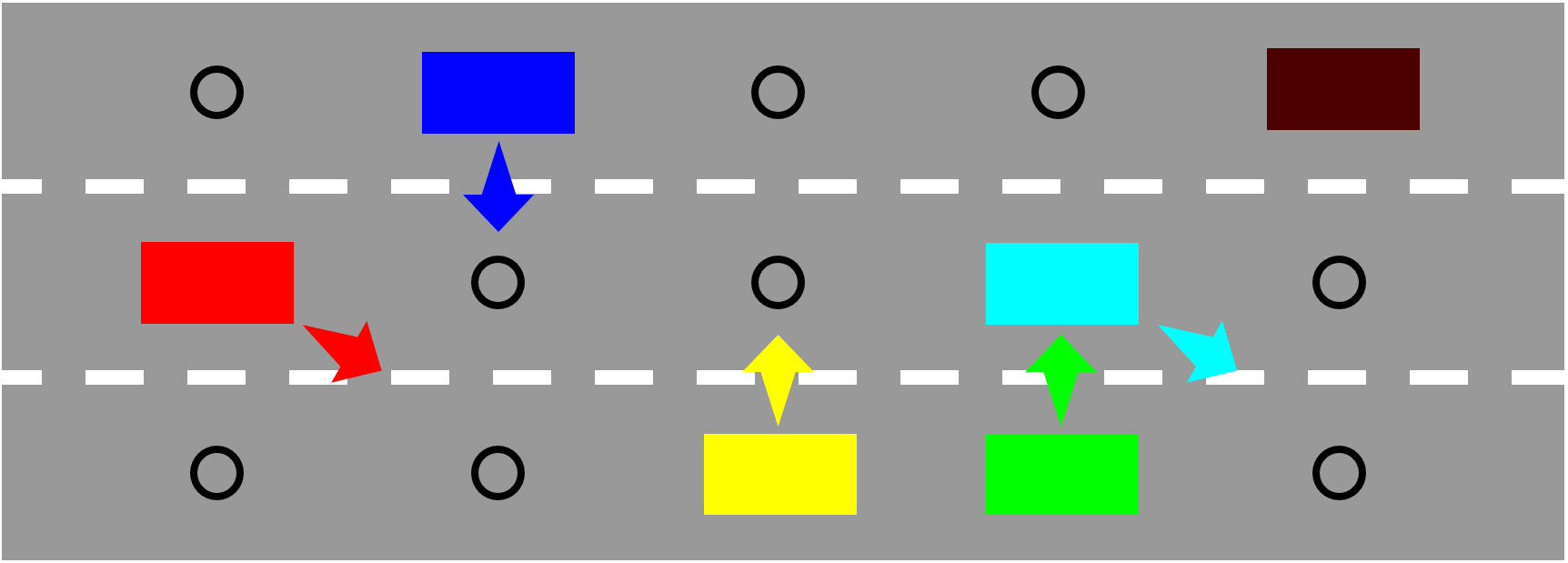}
    \label{case_state1}}
    \subfigure[State 2]{
    \includegraphics[width=0.22\linewidth]{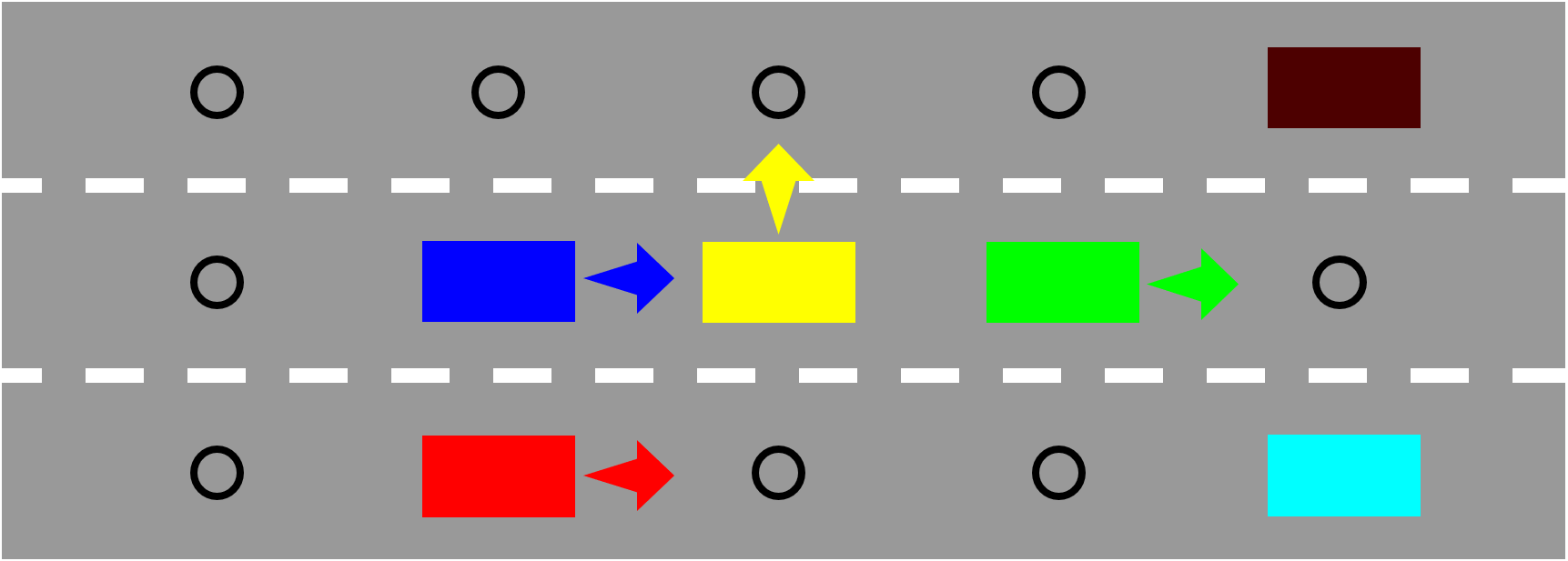}
    \label{case_state2}}
        \subfigure[State 3]{
    \includegraphics[width=0.22\linewidth]{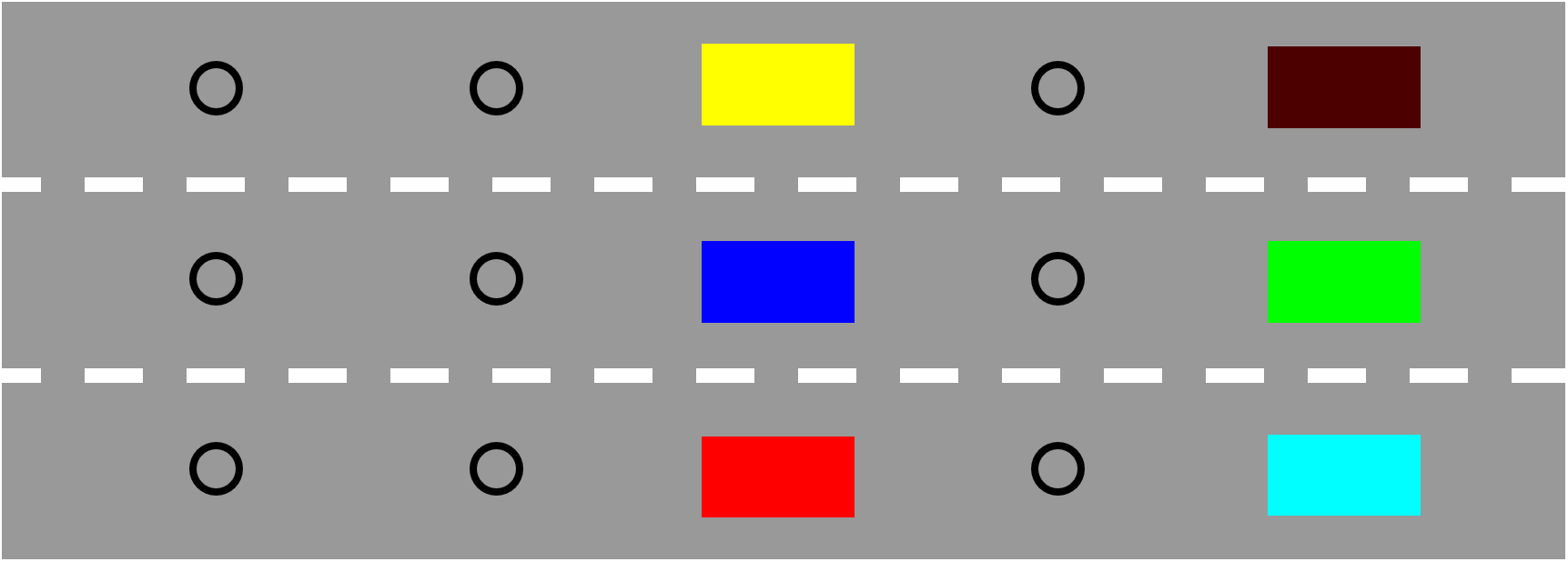}
    \label{case_state3}}
    \caption{Formation reconfiguration process of six vehicles.}
    \label{case}
\end{center}
\end{figure}

\subsection{Example}
\label{example2}

An example is provided in Fig.~\ref{case} to show the process of formation reconfiguration. Six vehicles start from messy points and need to form a parallel three-lane formation. The initial and expected position of vehicles are shown in Fig.~\ref{case_state1} and Fig.~\ref{case_state2}. The motion map for this example is eight-connected, and the reconfiguration process takes two cycles. The rectangles represent vehicles and the lines with the same color represent trajectories of certain vehicles. Fig.~\ref{case_movingprocess} shows the state and trajectories of vehicles in GCS, and the circles represent road points for vehicles to pass through. Fig.~\ref{case_state1}, Fig.~\ref{case_state2}, and Fig.~\ref{case_state3} show the states and movements of vehicles in RCS, and the black circles represent discretized relative points.

%
\section{Simulations}
\label{simu}
%

In this part, the simulation is conducted and the results are analyzed. The settings of the simulation are presented, and the benchmark methods are introduced. Then, numerical results and snapshots are presented to evaluate the proposed method.

\subsection{Simulation settings}
\label{settings}

The simulation is implemented with MATLAB 2017b and SUMO 0.32.0~\citep{lopez2018microscopic} on a personal computer with CPU Intel CORE i7-8700@3.2GHz. The average travelling time of vehicles is analyzed to evaluate the performance. The parameters chosen for this simulation are presented in Table~\ref{para}. It is important to notice that the following distance of two layers $d_\text{c}$ when passing the conflict zone is set twice as the safe following distance in formation. Hence, there is a transitional relative point between two layers for vehicles to temporally stay when moving in a formation. The optimization problem in (\ref{eqn - lon}) is solved by the GPOPS solver~\citep{patterson2014gpops}.

\begin{table}[htbp]
\centering
\caption{Simulation parameters}
\label{para}
\begin{tabular}{lll}
\toprule
Safe one-lane following gap in formation                        &   $d_\text{F}$          & $17.5\,\mathrm{m}$ \\ 
Safe one-lane following gap in conflict zone                        &   $d_\text{c}$          & $35\,\mathrm{m}$ \\ 
Formation switching cycle                          &   $T_\text{F}$              & $4\,\mathrm{s}$ \\ 
Desired speed in formation                          &   $v_\text{F}$              & $10\,\mathrm{m/s}$\\ 
Minimum speed of vehicle                          &   $v_{\text{min}}$              & $0\,\mathrm{m/s}$ \\ 
Maximum speed of vehicle                          &   $v_{\text{max}}$              & $20\,\mathrm{m/s}$\\ 
Minimum acceleration of vehicle              &   $a_{\text{min}}$              & $-10\,\mathrm{m/s^2}$ \\ 
Maximum acceleration of vehicle                 &   $a_{\text{max}}$          & $5\,\mathrm{m/s^2}$ \\ 
\bottomrule  
\end{tabular}
\end{table}

The scenario of simulation is a three-lane intersection with four incoming arms, as shown in Fig.~\ref{scenarioFig}. The length of each arm is $400\,\mathrm{m}$. We conduct simulation in different input traffic volumes, from $1000\,\mathrm{vehicle/hour}$ to $5000\,\mathrm{vehicle/hour}$. Vehicles are generated randomly on different lanes. At each input traffic volume, we conduct simulation with four different input proportion of vehicles with different turning expectation, where the proportion of right-turning vehicles, straight-going vehicles, and left-turning vehicles is $(0.33,0.33,0.34)$, $(0.25,0.25,0.5)$, $(0.25,0.5,0.25)$, and $(0.5,0.25,0.25)$ respectively. At each input volume and turning proportion, we compare the results of three scenarios: unsignalized intersection with flexible lane direction, unsignalized intersection with fixed lane direction, and signalized intersection with fixed lane direction. The signal timing cycle is given in Fig.~\ref{signal} and time for each phase is $20\,\mathrm{s}$. In order to clear vehicles in conflict zone after each phase, a three-second clearing yellow-light phase is conducted between each of the four phases. Similar signal timing cycle design can be found in~\cite{xu2018cooperative}.  Random inputs are generated for each cycle of simulation, and the results are the average of ten cycles. In total, $200$ vehicles are generated for each case, where the first $100$ vehicles are used for simulation warm up, and we only count the travelling time of the last $100$ vehicles for result comparison.

\begin{figure}
\begin{center}
    \subfigure[Phase 1]{
    \includegraphics[width=0.18\linewidth]{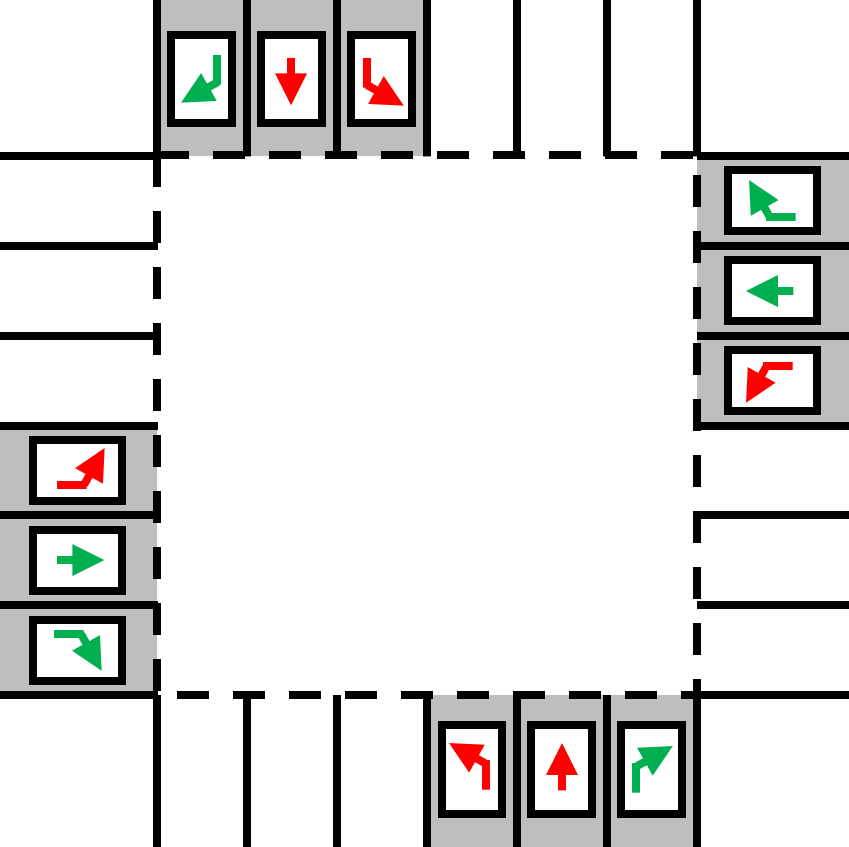}
    \label{movingprocess}}
    \subfigure[Phase 2]{
    \includegraphics[width=0.18\linewidth]{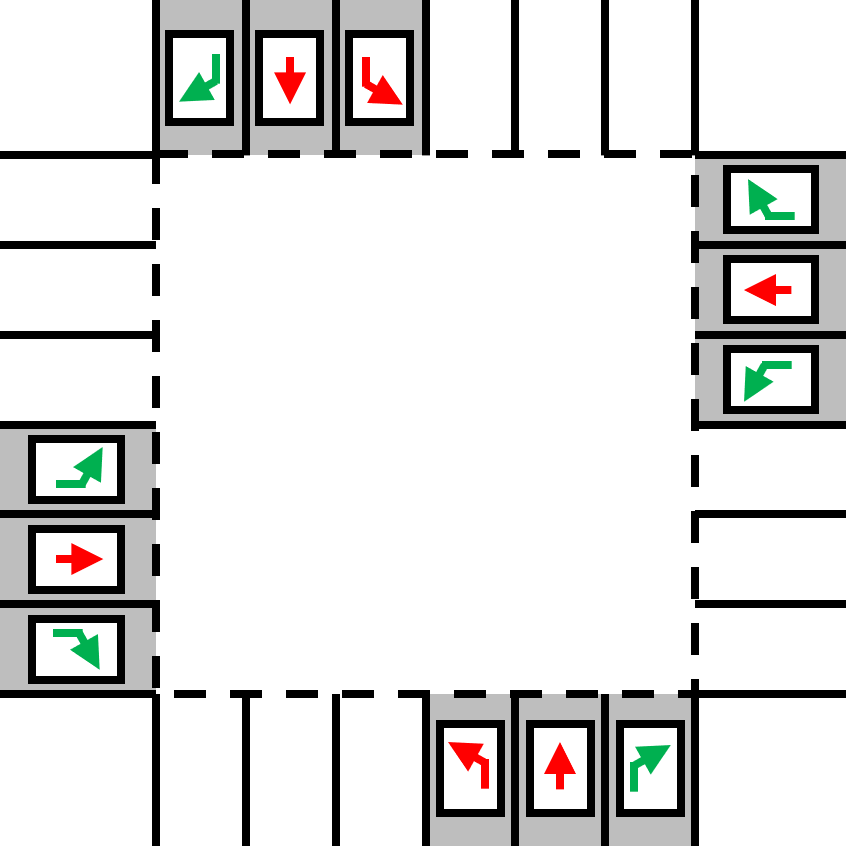}
    \label{state1}}
    \subfigure[Phase 3]{
    \includegraphics[width=0.18\linewidth]{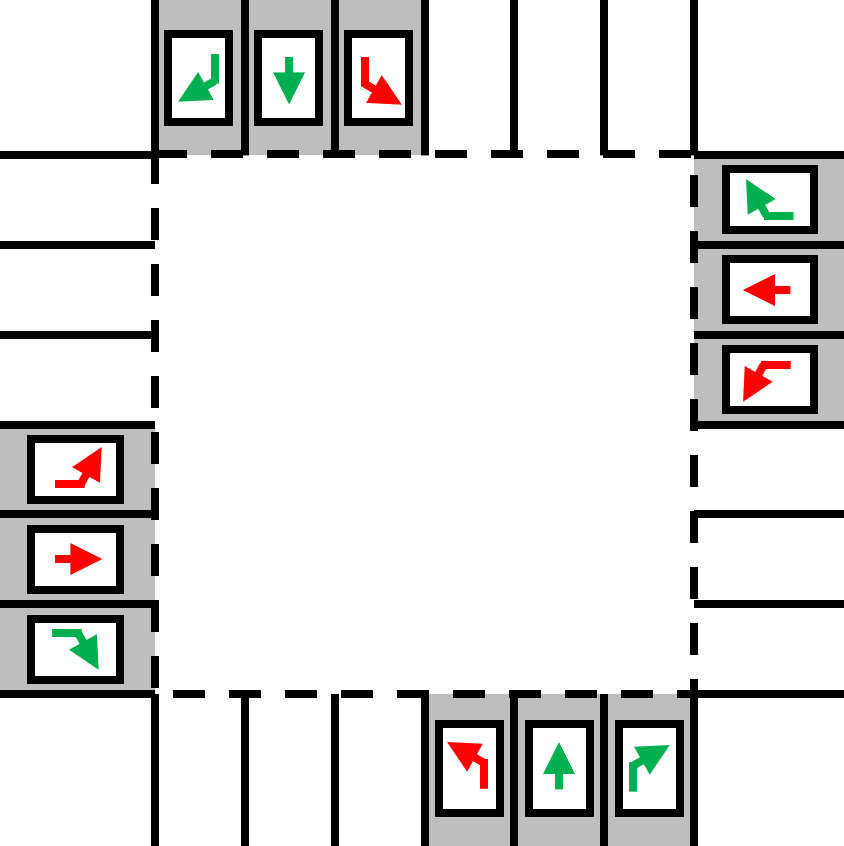}
    \label{state2}}
    \subfigure[Phase 4]{
    \includegraphics[width=0.18\linewidth]{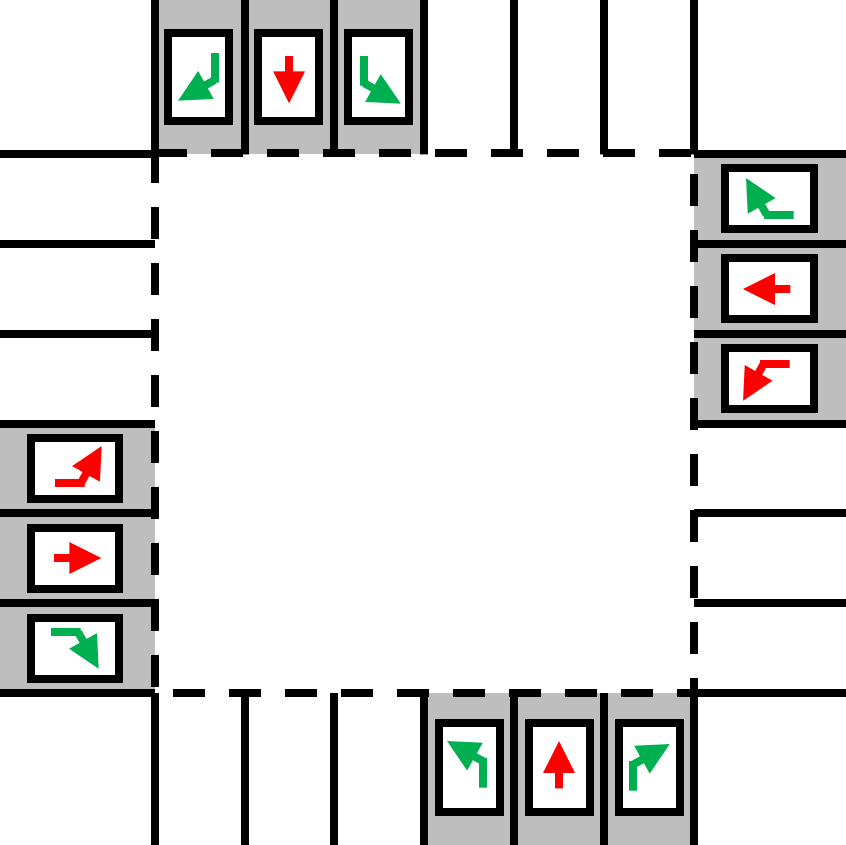}
    \label{state3}}
    \caption{Four phases of signal timing cycle.}
    \label{signal}
\end{center}
\end{figure}

\subsection{Results}
\label{results}

The results of average travelling time that vehicles take to travel on the incoming arms are shown in Fig.~\ref{att}. The four figures are the results under four turning proportion of inputs. The standard deviation under different random inputs is also marked in the figures. From the figures we can see that our method at flexible-lane-direction scenarios achieves that best performance, under all the input volumes and turning proportion. The reason is that our method allows more vehicles to pass the conflict zone simultaneously than the other two methods. At low traffic volumes (from 1000 to $3000\,\mathrm{vehicle/hour}$) the unsignalized cooperation methods outperform the signalized method, because vehicles don't have to stop when arriving at the conflict zone and can maintain a higher speed in the whole journey. As the input volumes becomes higher, the performance of the unsignalized cooperation with fixed lane direction become worse than the signalized method, because vehicles have to slow down or even brake to achieve their desired layers to pass the intersection. It is important to notice that the performance of the methods differs at different input proportion, but the proposed method always has the shortest travelling time results.

\begin{figure}[b]
\begin{center}
    \subfigure[Proportion $(0.33,0.33,0.34)$]{
    \includegraphics[width=0.225\linewidth]{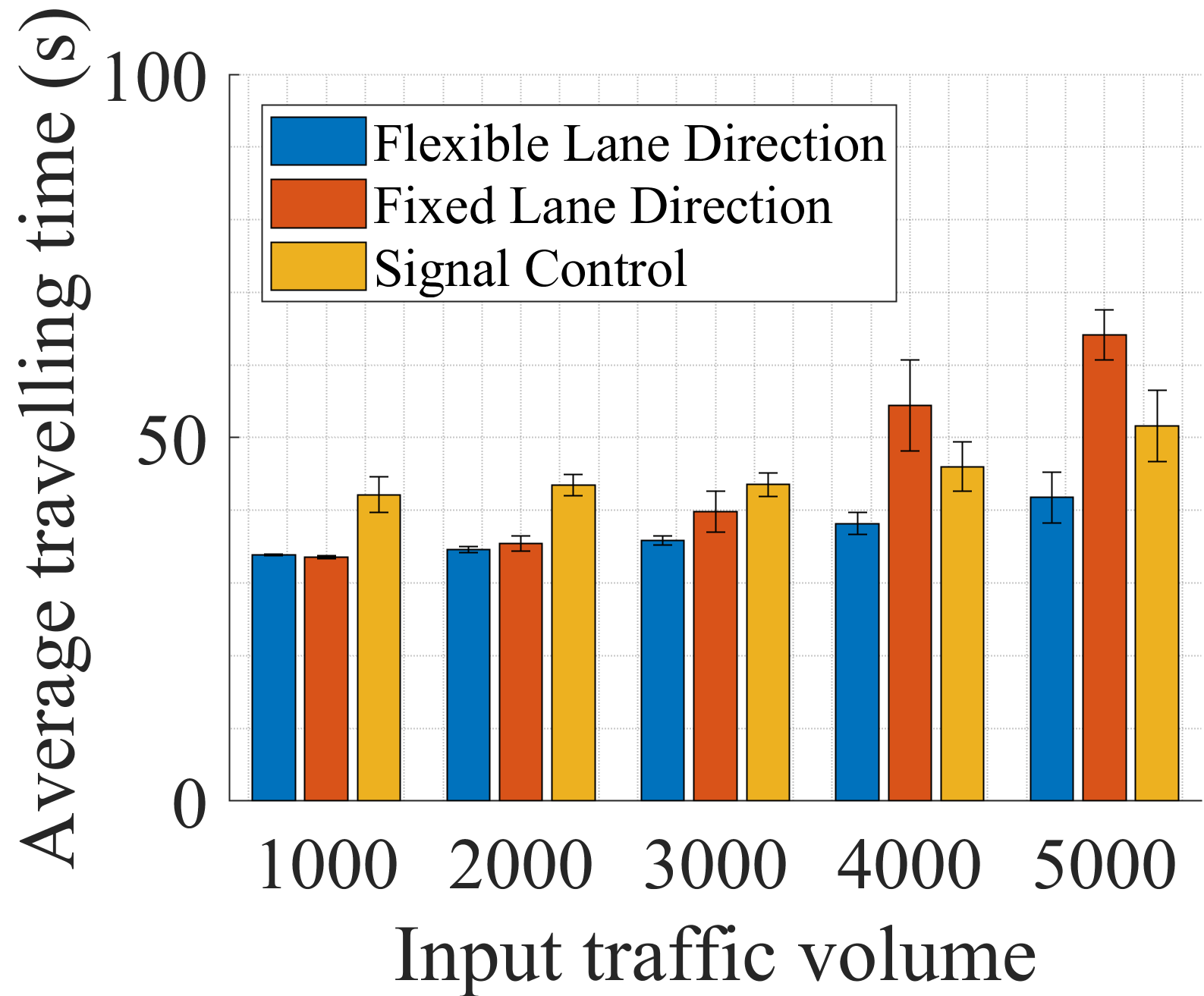}
    \label{ratio1}}
    \subfigure[Proportion $(0.25,0.25,0.5)$]{
    \includegraphics[width=0.225\linewidth]{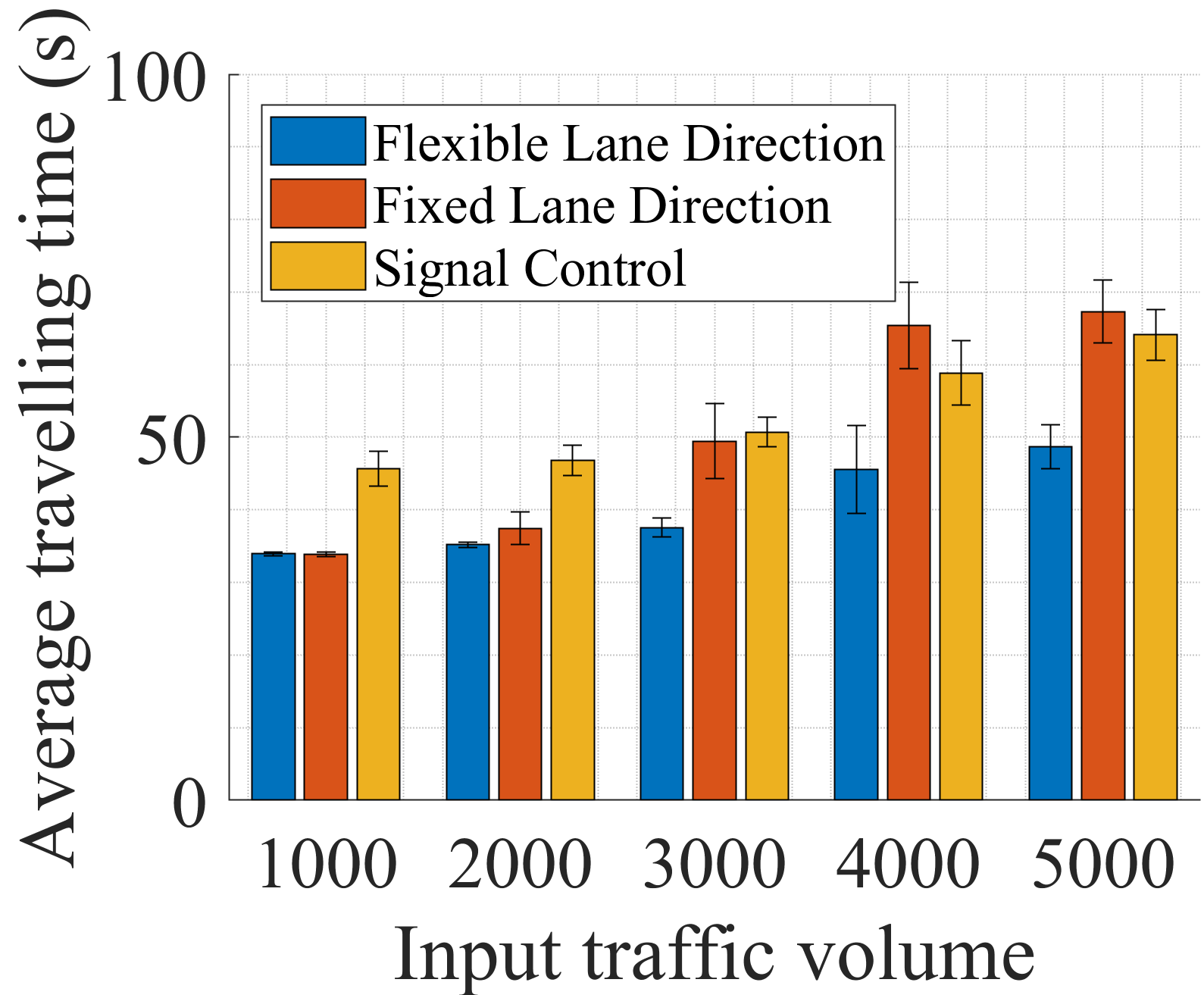}
    \label{ratio2}}
    \subfigure[Proportion $(0.25,0.5,0.25)$]{
    \includegraphics[width=0.225\linewidth]{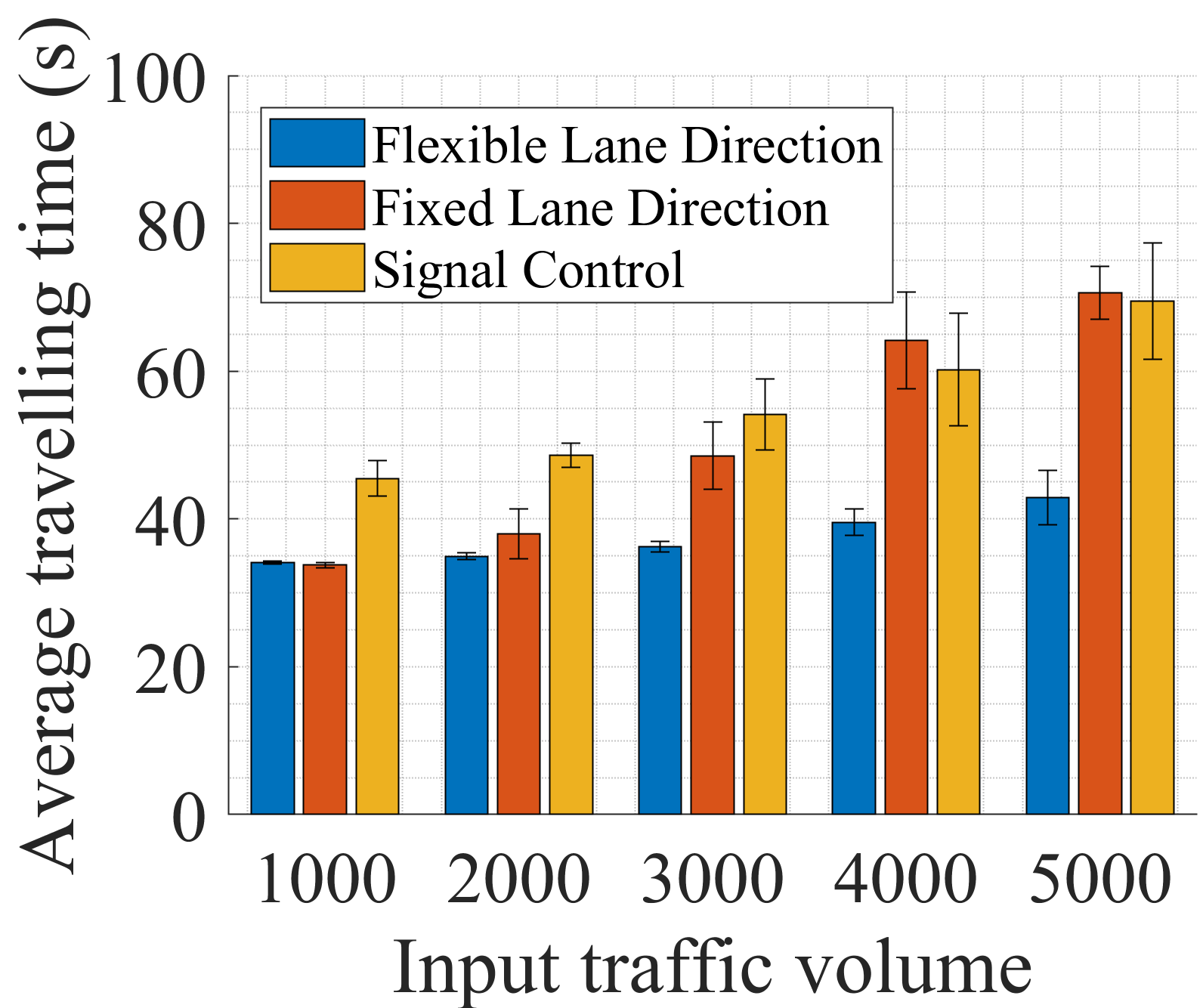}
    \label{ratio3}}
    \subfigure[Proportion $(0.5,0.25,0.25)$]{
    \includegraphics[width=0.225\linewidth]{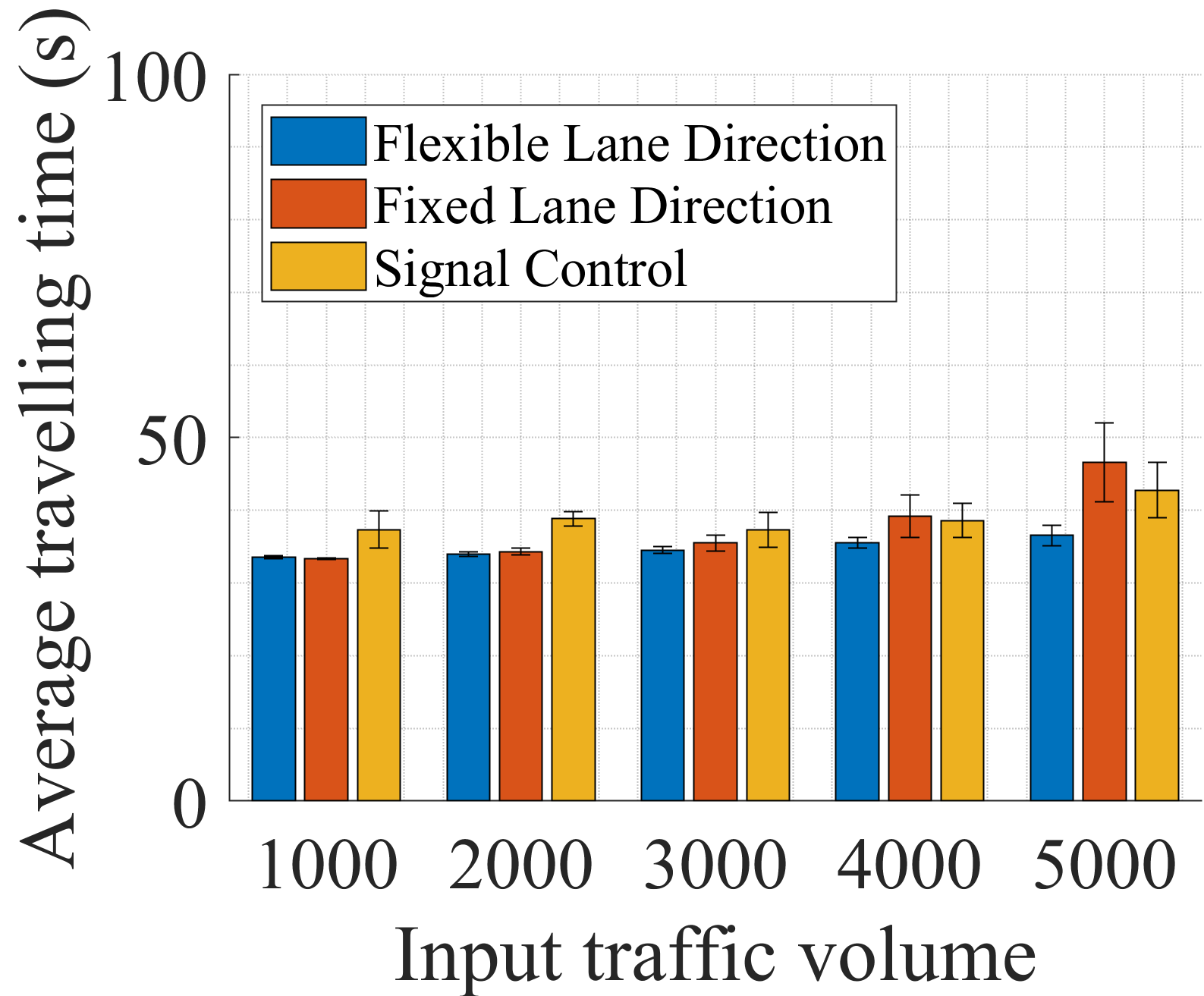}
    \label{ratio4}}
    \caption{Results of average travelling time of vehicles.}
    \label{att}
\end{center}
\end{figure}

\begin{figure}
\begin{center}
    \includegraphics[width=0.95\linewidth]{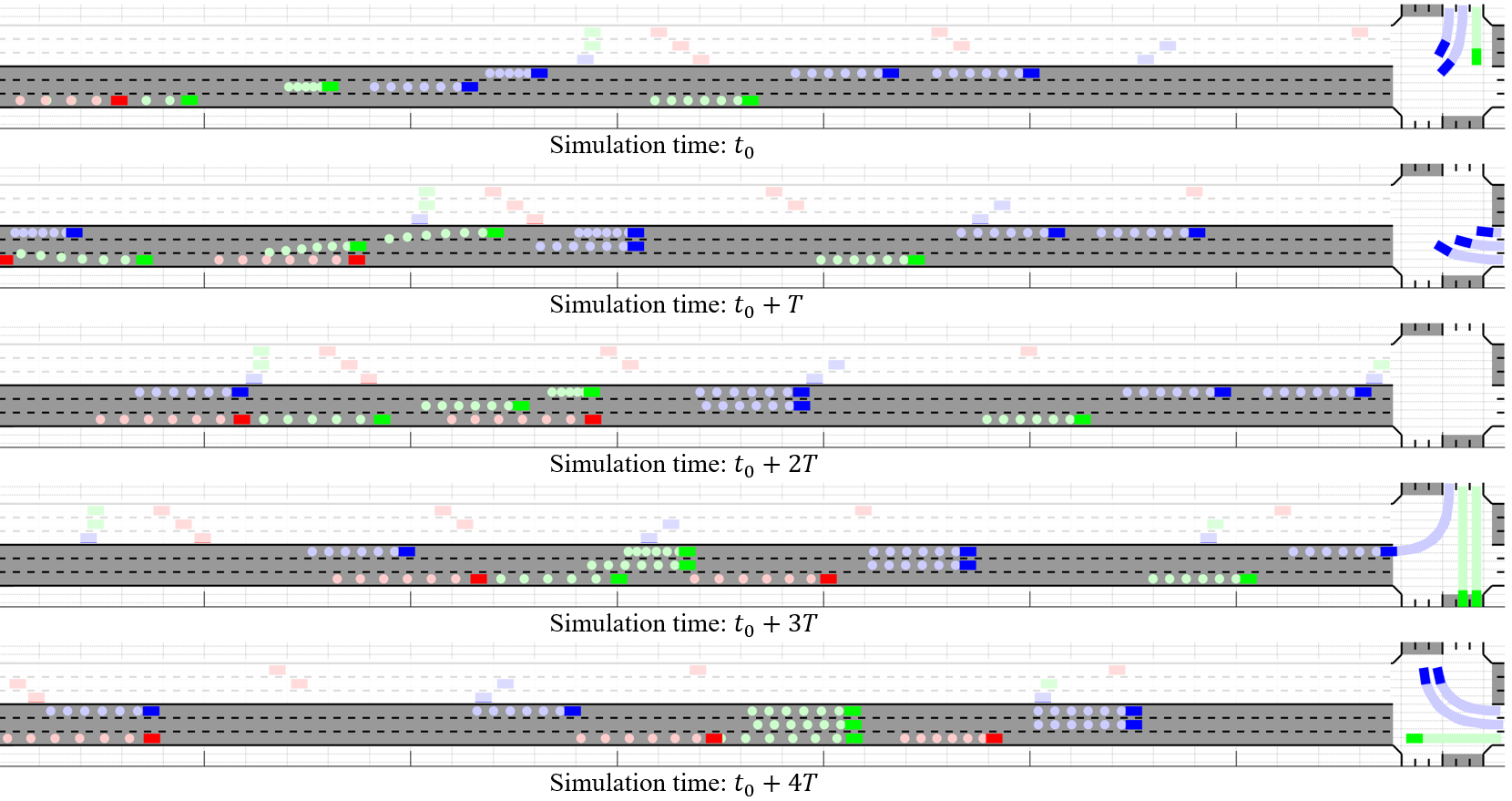}
    \caption{Snapshots of the west incoming arm at four continuous cycles of the simulation. The rectangles represent vehicles and their color shows their turning expectation, where red for right-turning, green for straight-going, and blue for left-turning. The full version video is available online at the address: {\color{blue}https://github.com/cmc623/flexible-lane-direction-intersection}}
    \label{segment}
\end{center}
\end{figure}

\begin{figure}
\begin{center}
    \subfigure[]{
    \includegraphics[width=0.225\linewidth]{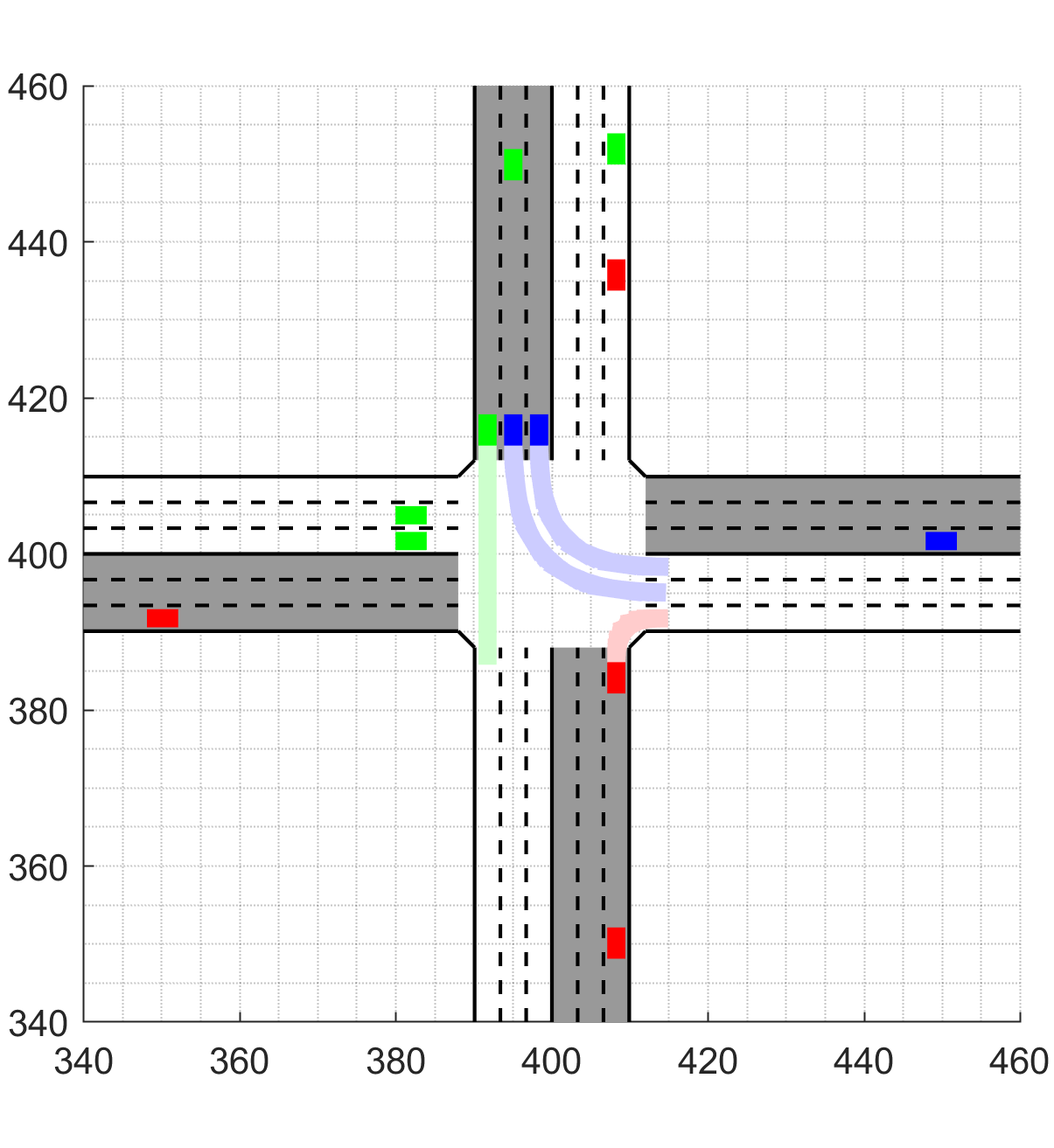}
    \label{1_460}}
    \subfigure[]{
    \includegraphics[width=0.225\linewidth]{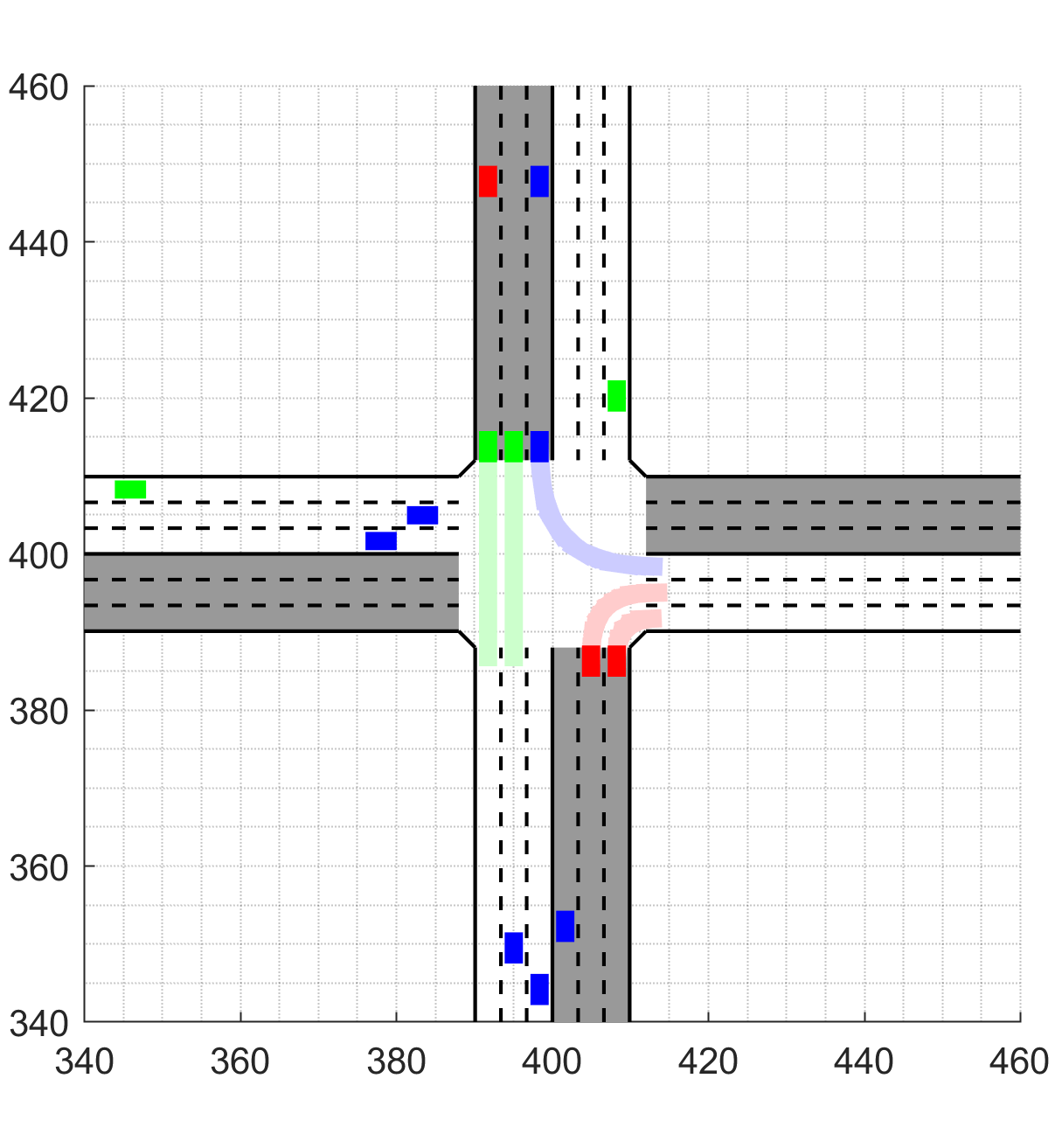}
    \label{1_970}}
    \subfigure[]{
    \includegraphics[width=0.225\linewidth]{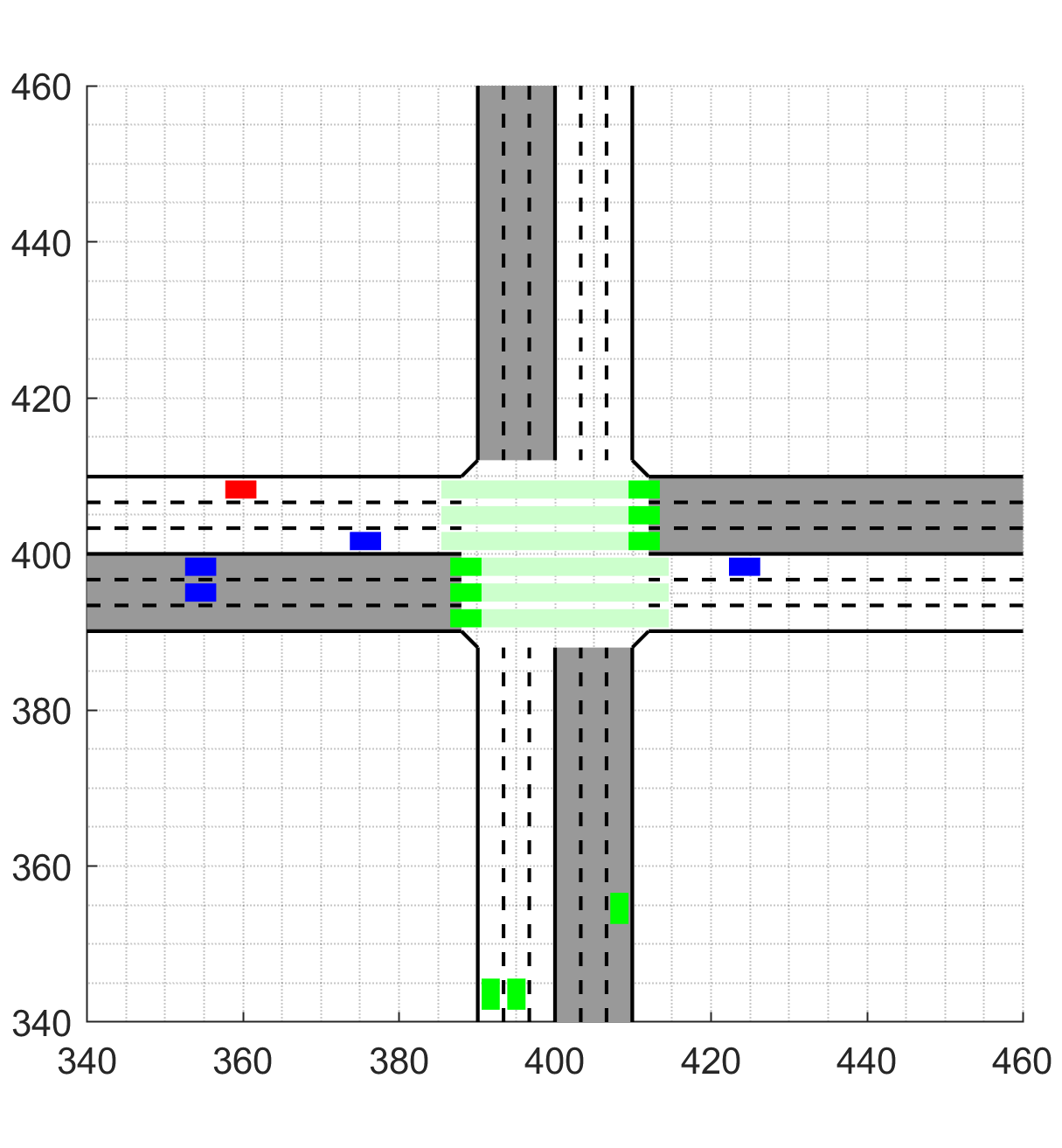}
    \label{1_1040}}
    \subfigure[]{
    \includegraphics[width=0.225\linewidth]{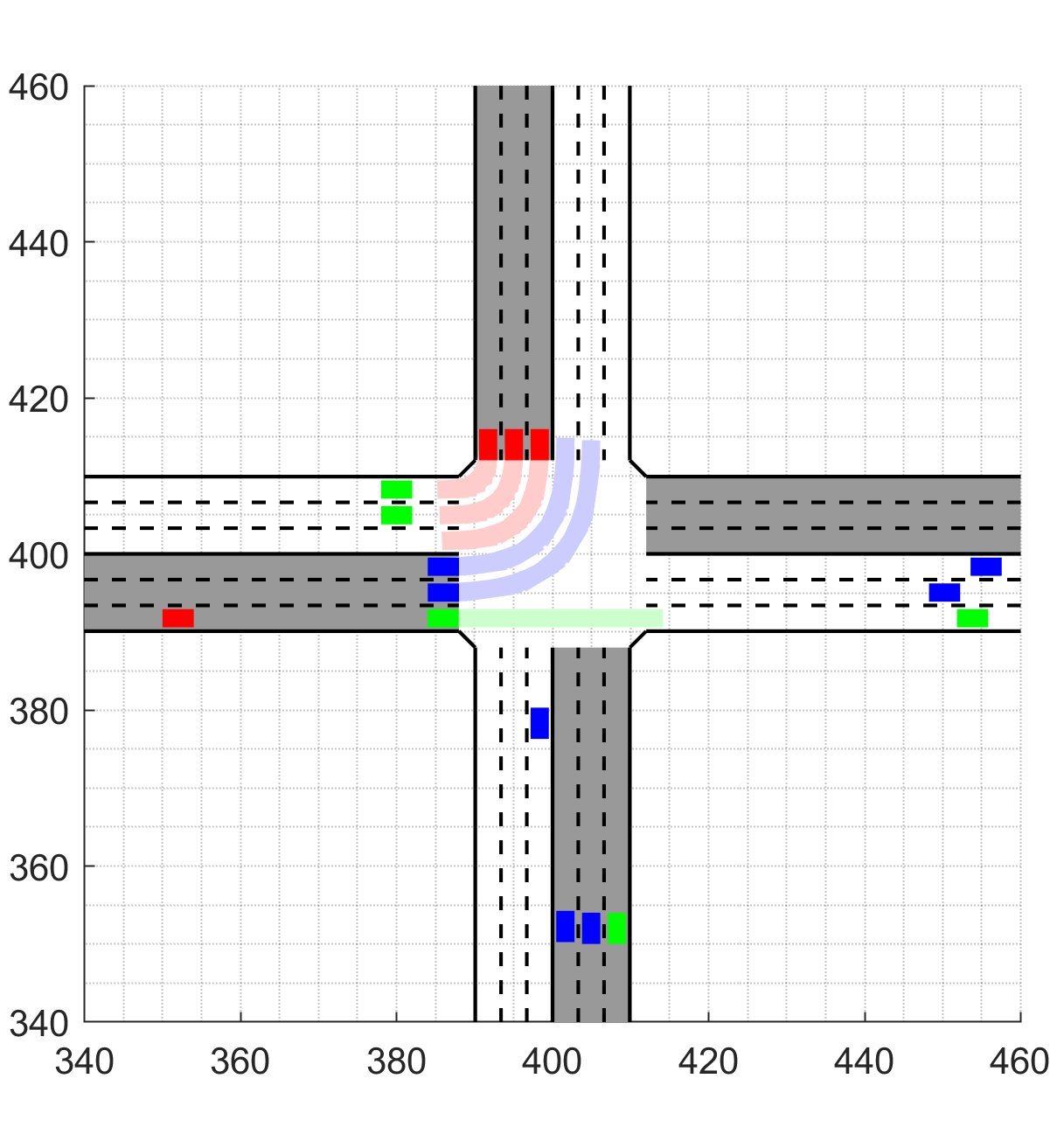}
    \label{1_1850}}
    \subfigure[]{
    \includegraphics[width=0.225\linewidth]{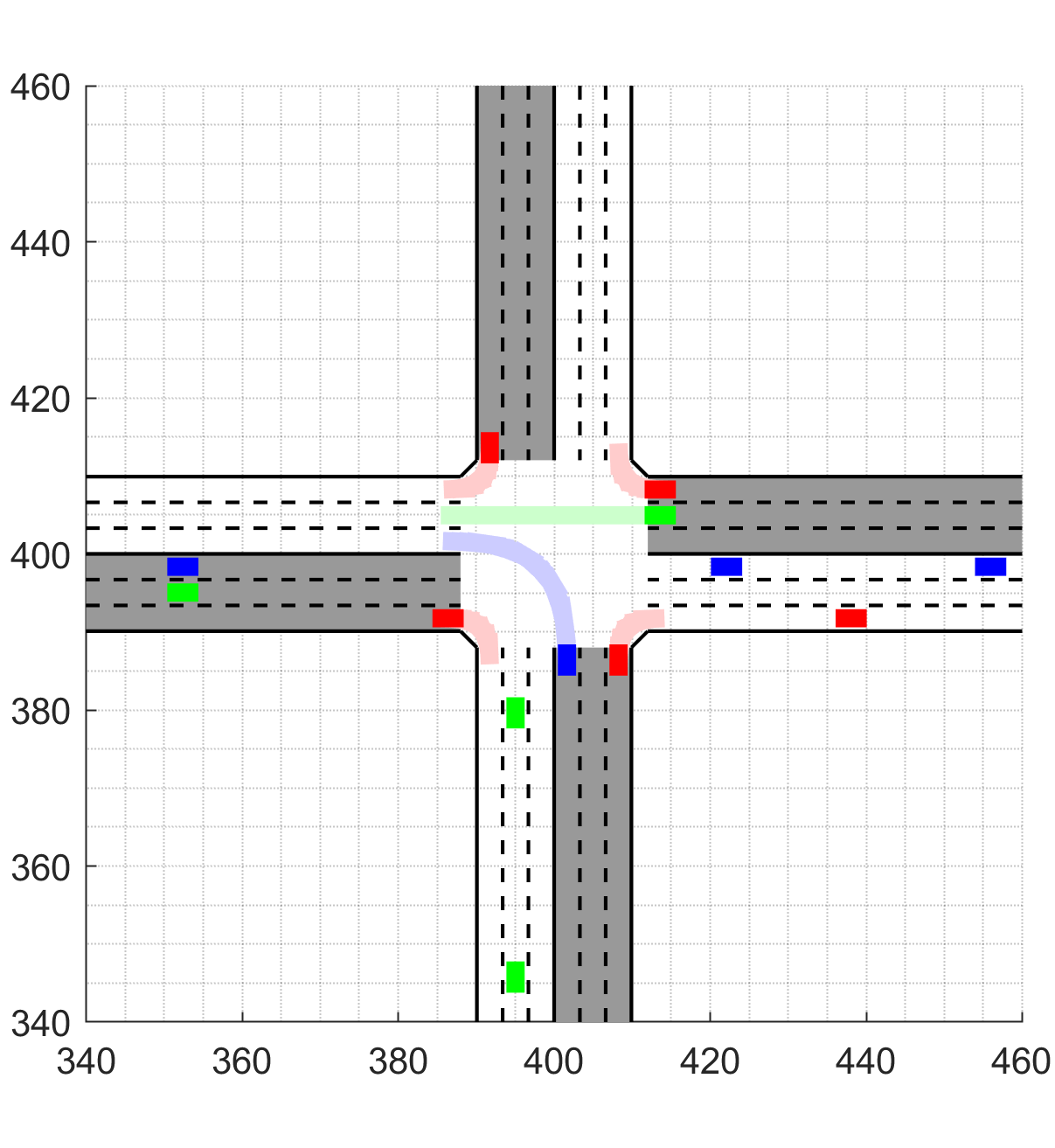}
    \label{2_530}}
    \subfigure[]{
    \includegraphics[width=0.225\linewidth]{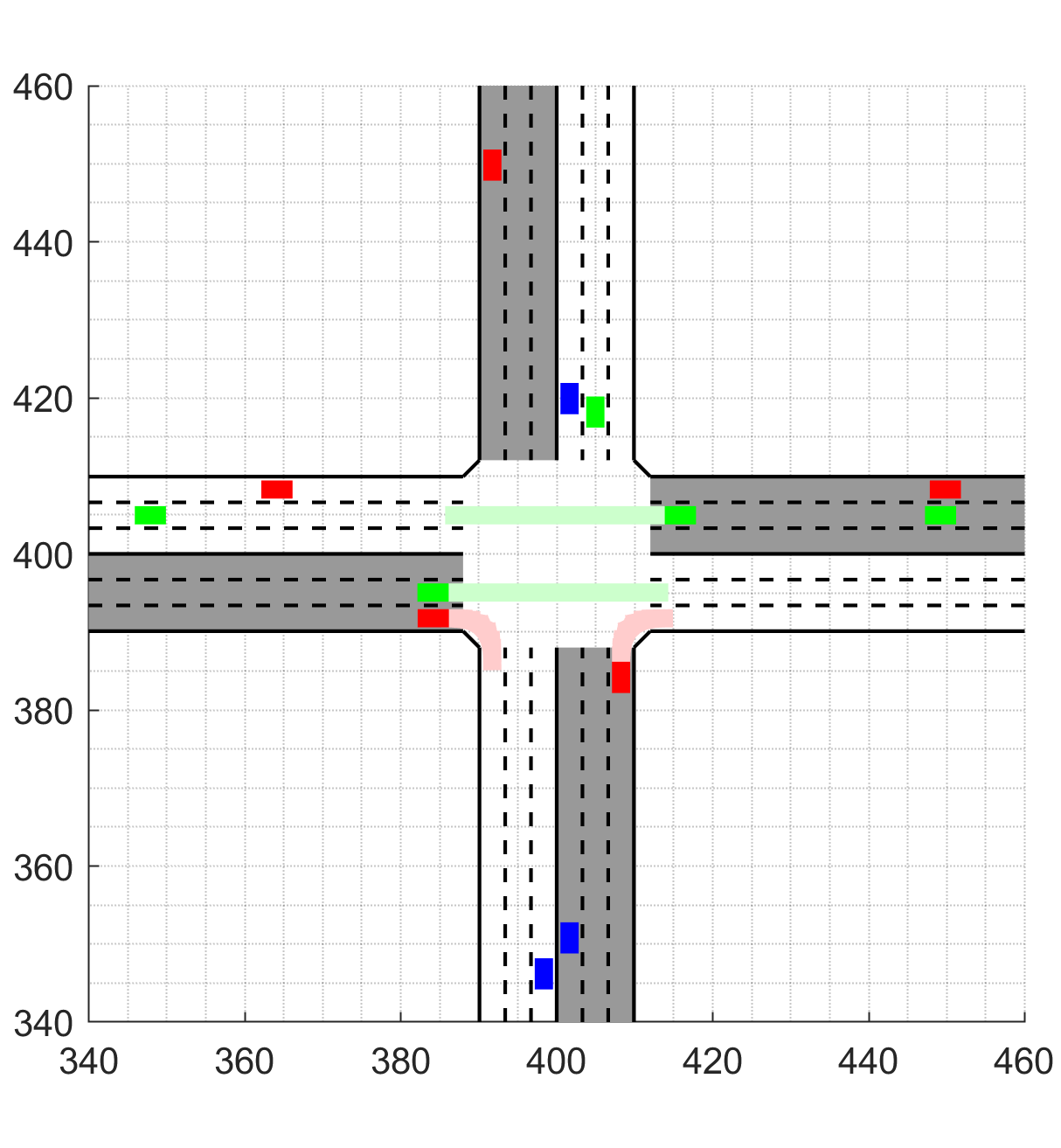}
    \label{2_1440}}
    \subfigure[]{
    \includegraphics[width=0.225\linewidth]{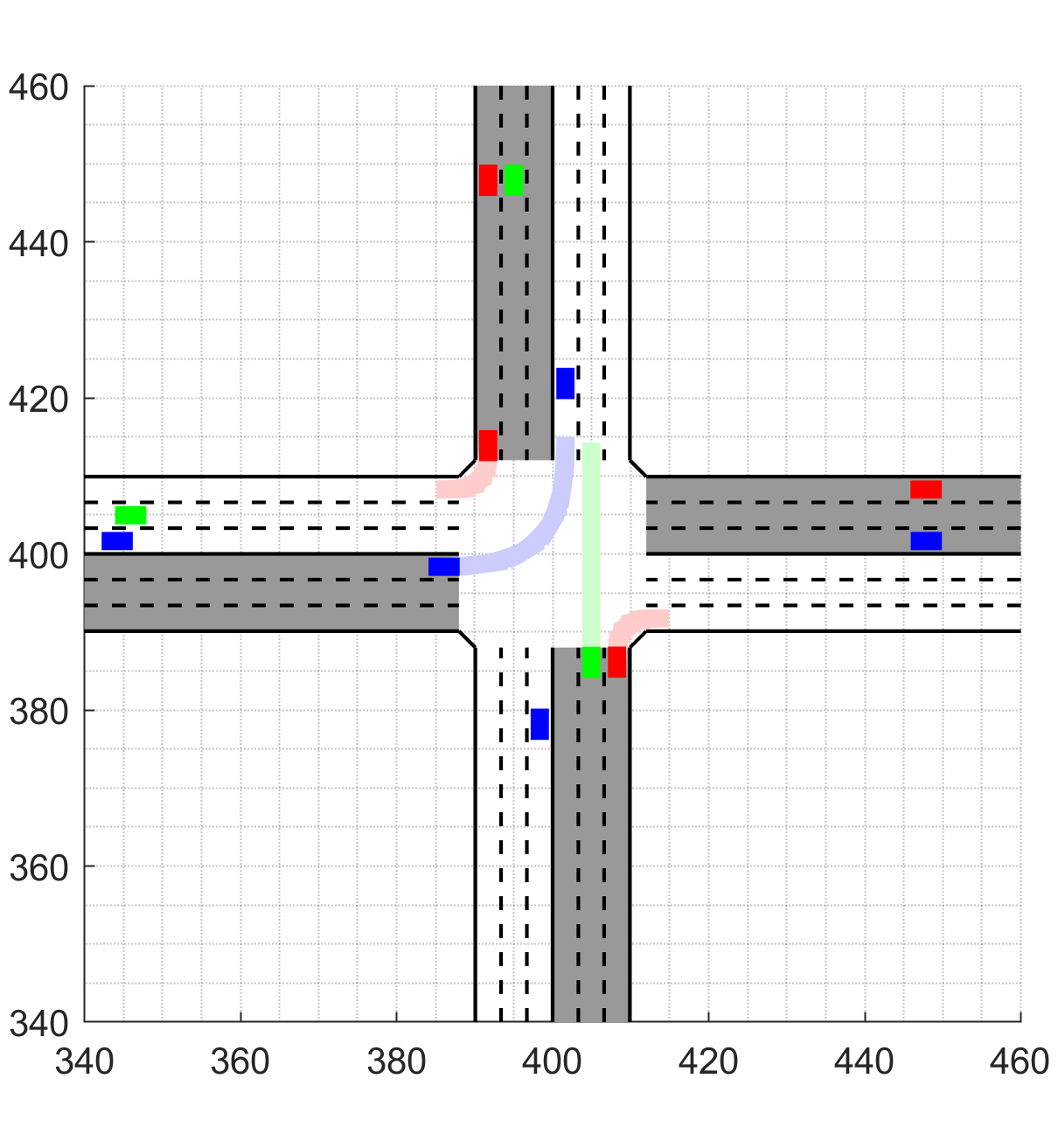}
    \label{2_1610}}
    \subfigure[]{
    \includegraphics[width=0.225\linewidth]{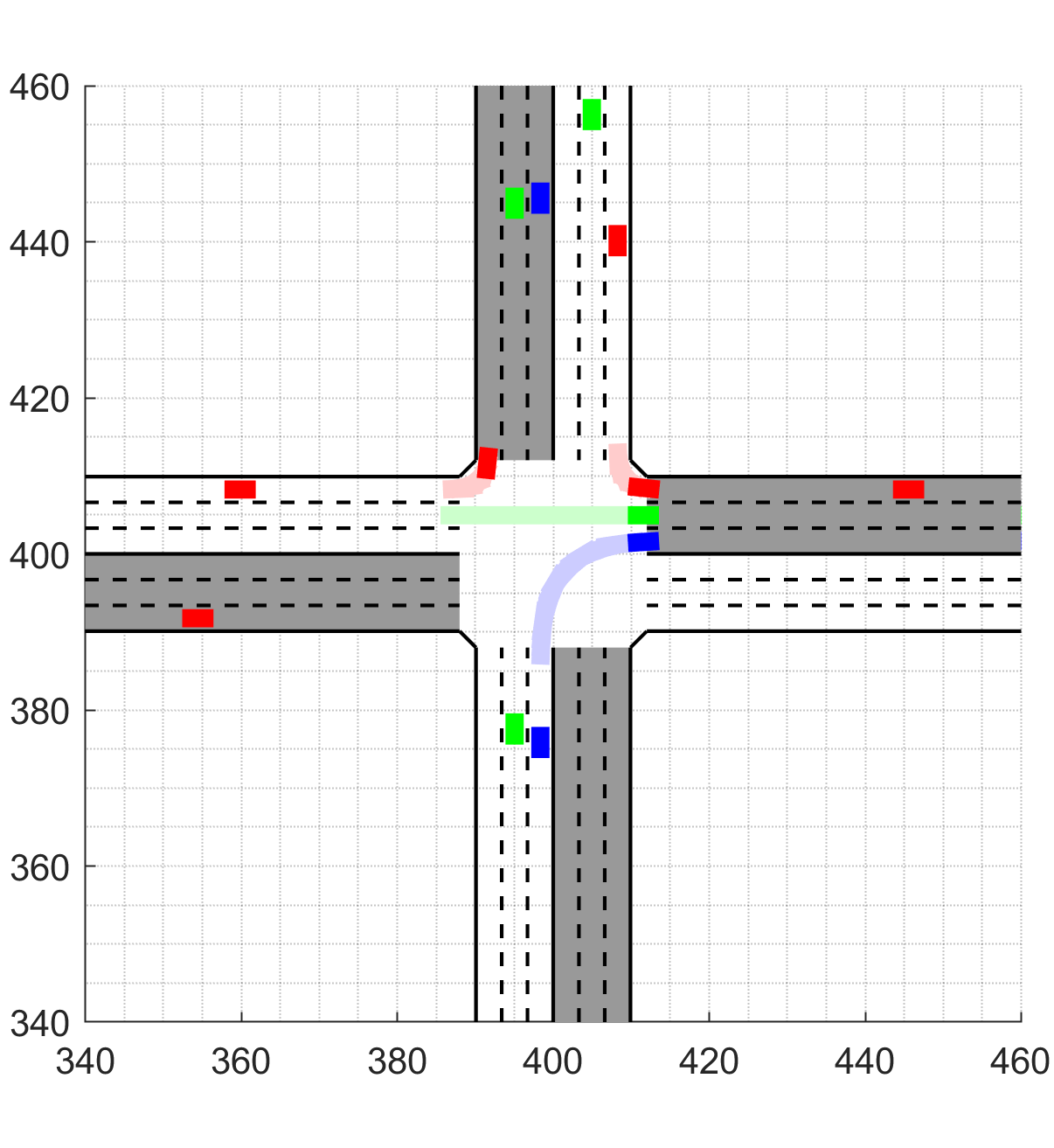}
    \label{2_1680}}
    \subfigure[]{
    \includegraphics[width=0.225\linewidth]{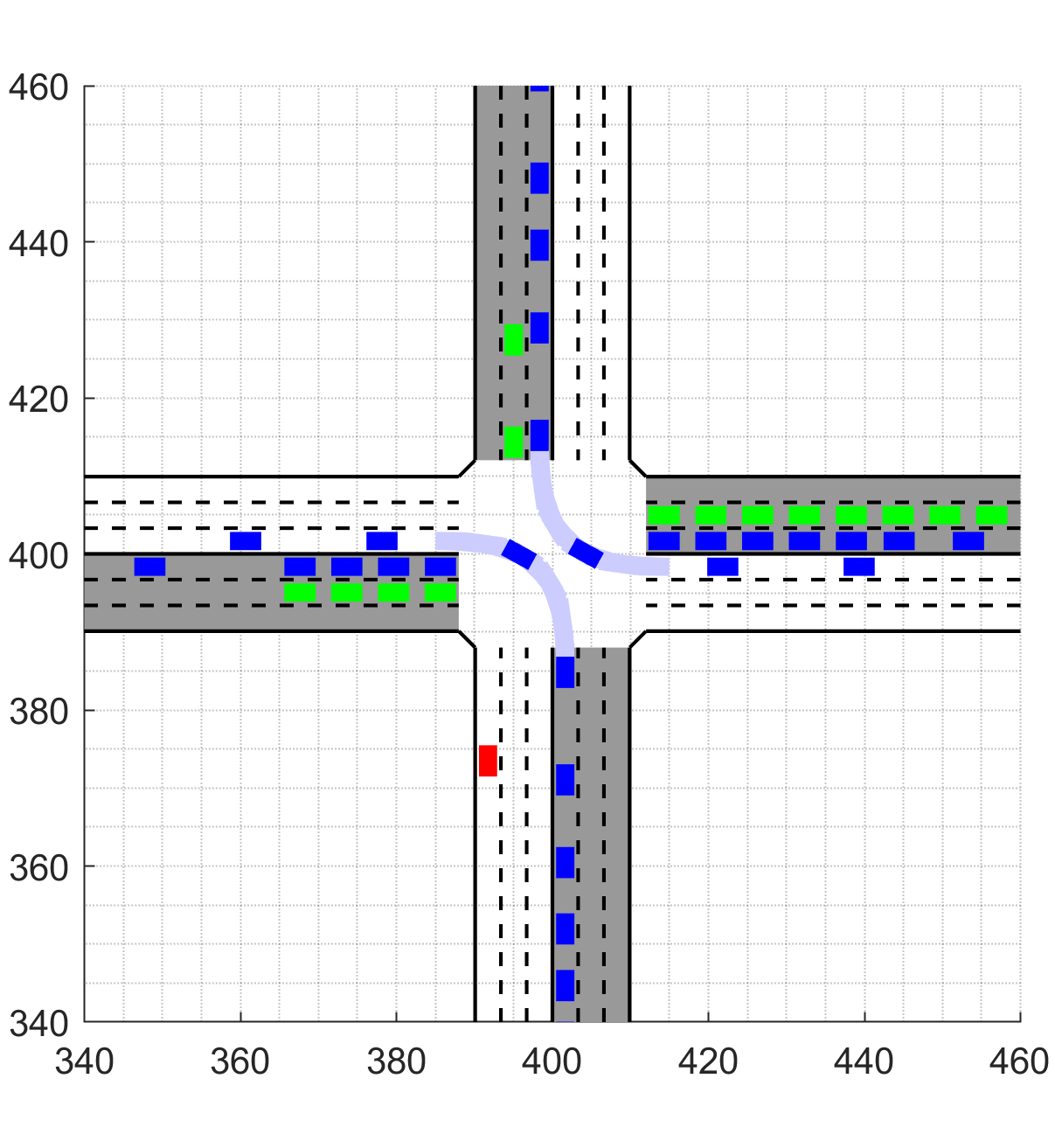}
    \label{3_1210}}
    \subfigure[]{
    \includegraphics[width=0.225\linewidth]{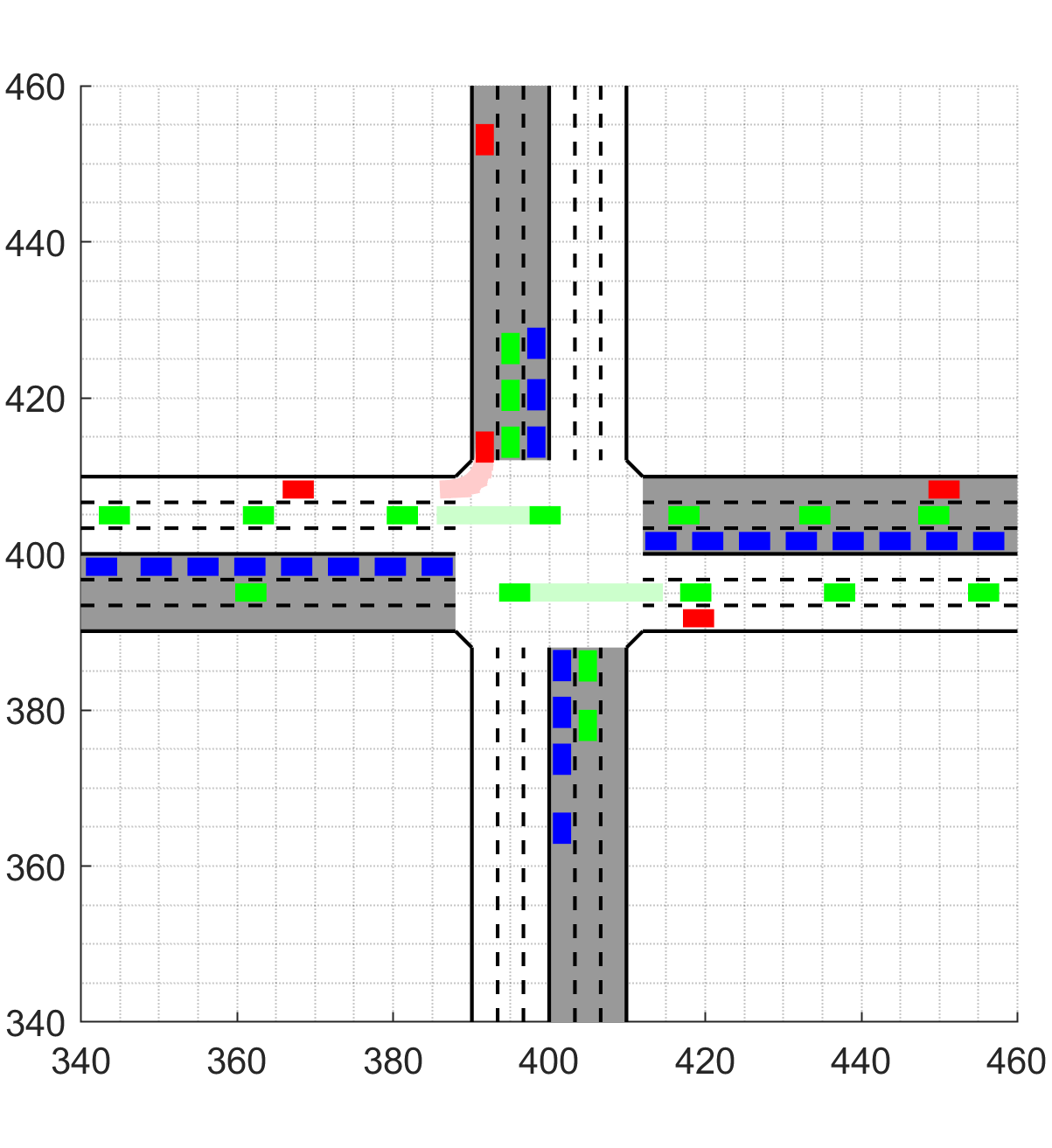}
    \label{3_1530}}
    \subfigure[]{
    \includegraphics[width=0.225\linewidth]{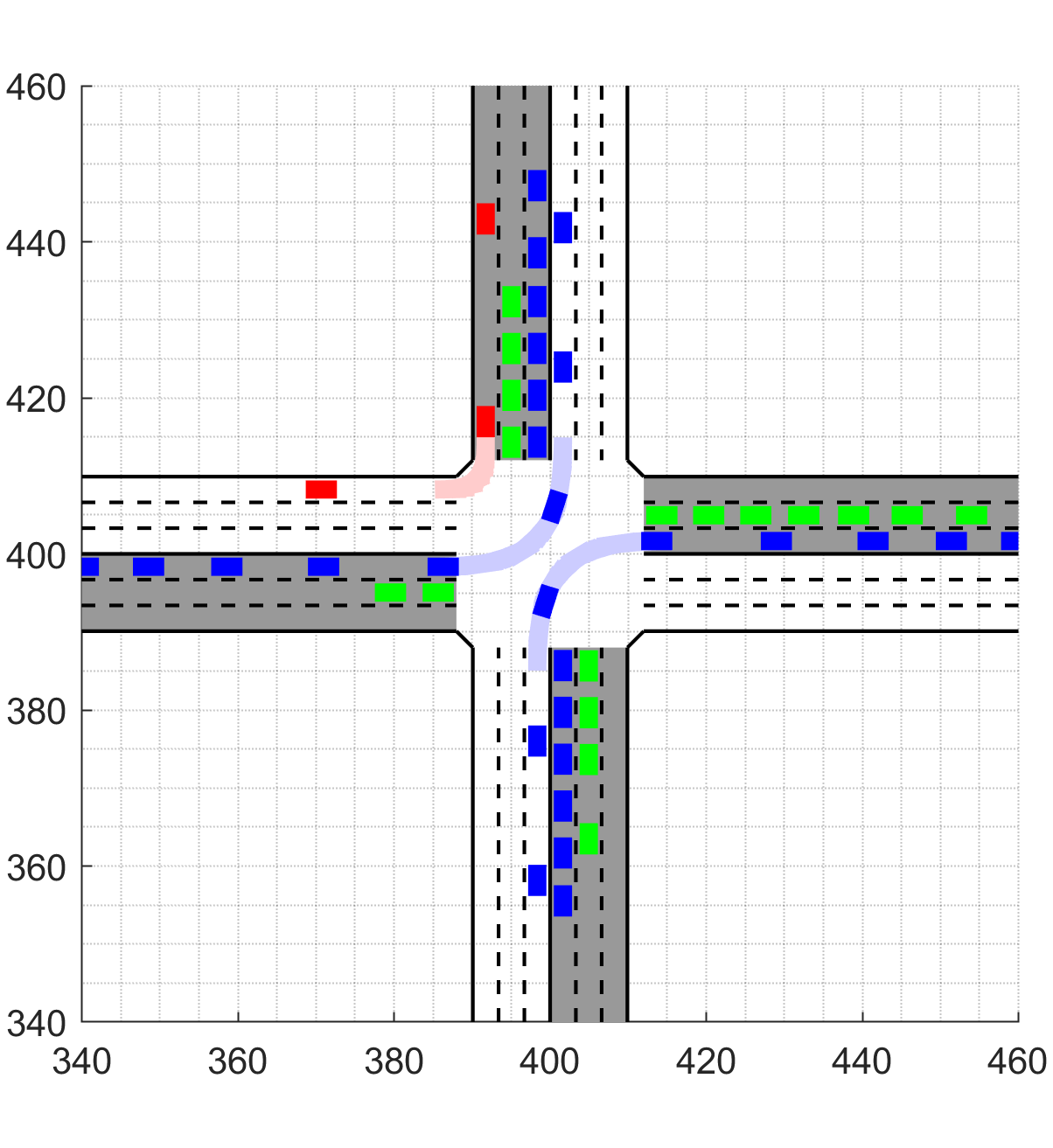}
    \label{3_1690}}
    \subfigure[]{
    \includegraphics[width=0.225\linewidth]{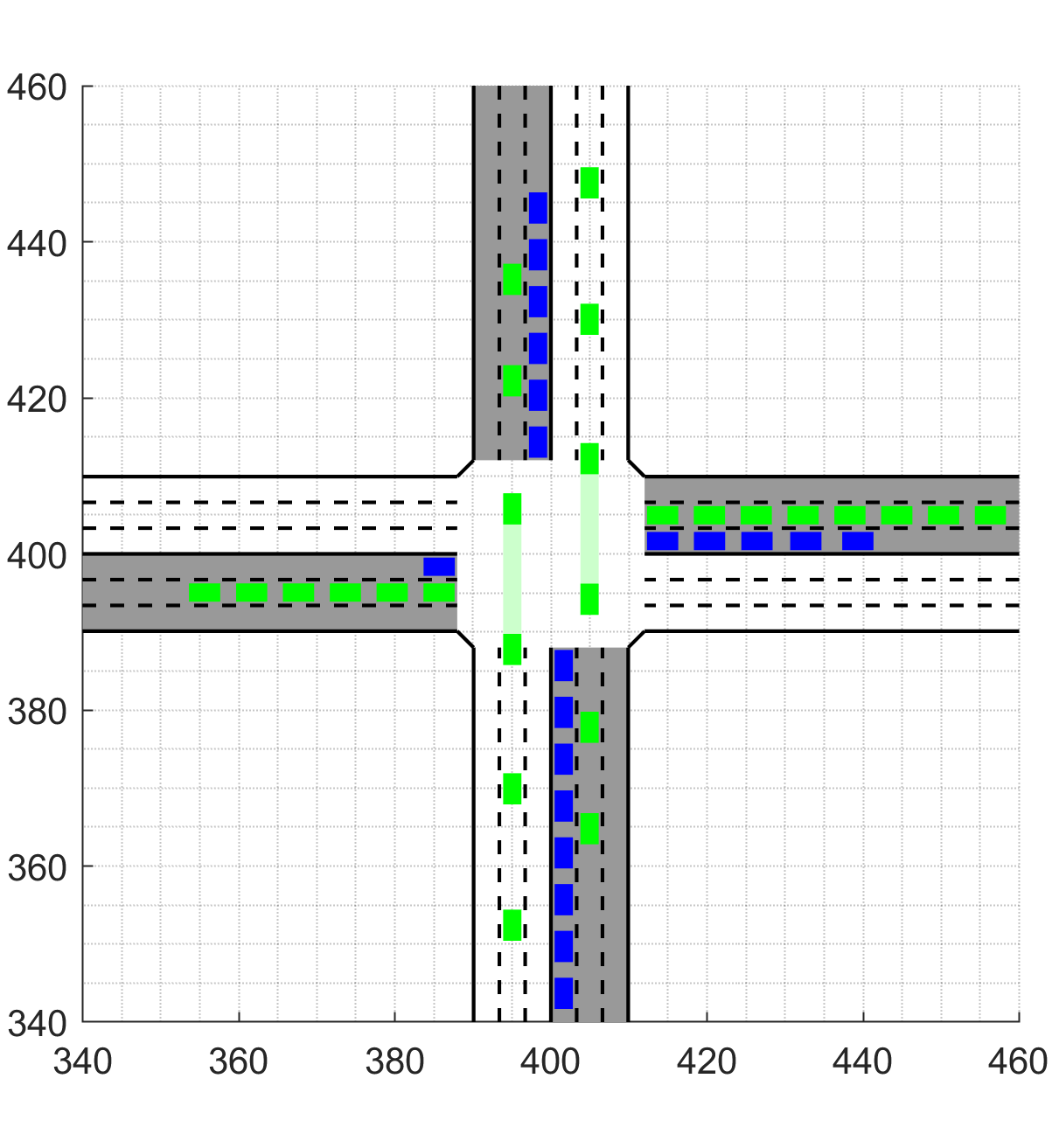}
    \label{3_1910}}
    \caption{Snapshots of conflict zone of the simulation. (a)-(d) are the results of unsignalized intersection with flexible lane direction, (e)-(h) are the results of unsignalized intersection with fixed lane direction, and (i)-(l) are the results of signalized intersection with fixed lane direction. The rectangles represent vehicles and their color shows their turning expectation, where red for right-turning, green for straight-going, and blue for left-turning. The full version video is available online at the address:\protect\\ {\color{blue}https://github.com/cmc623/flexible-lane-direction-intersection}}
    \label{snapshots}
\end{center}
\end{figure}

Snapshots of the west incoming arm at four continuous cycles are presented in Fig.~\ref{segment}. The rectangles represent vehicles and their color shows their turning expectation, where red for right-turning, green for straight-going, and blue for left-turning. The points represent trajectories of vehicles in the former 2.5 seconds. It shows the process of formation reconfiguration where vehicles adjust their position to catch up with their desired layers and lanes. Snapshots at input traffic volume $5000\,\mathrm{vehicle/hour}$ and turning proportion $(0.33,0.33,0.34)$ of the three methods are provided in Fig.~\ref{snapshots} to show the process of simulation. We also show the trajectories of vehicles when passing the conflict zone with lighter color. From the snapshots we can see that our method allows more vehicles to pass the conflict zone simultaneously than the other two methods, and vehicles have to stop and wait for signal timing when using the signalized method.

%
\section{Conclusions}
\label{conc}
%

This paper proposes a multi-lane unsignalized intersection cooperation method that considers flexible lane direction. The two-dimensional distribution of vehicles is calculated and vehicles that are not in conflict are scheduled to pass the intersection simultaneously. The formation reconfiguration method is utilized to achieve cooperative lane changing and longitudinal position adjustment of vehicles. Simulations are conducted at different input traffic volumes and turning proportion of vehicles, and the results indicate that: (1) our method is able to cooperatively control vehicles to avoid collision and pass unsignalized intersections with flexible lane direction, and (2) the performance of our method is better than the unsignalized cooperation method with fixed lane direction and the signalized method.

%
\section*{Acknowledgment}
\label{ack}
This work is supported by the National Key Research and Development Program  of  China  under  Grant  2018YFE0204302,  the  National  Natural Science Foundation of China under Grant 52072212, and Intel Collaborative Research Institute Intelligent and Automated Connected Vehicles. 
%

\bibliography{thesis}

\end{document}